\newcommand{\R}{\mathbb R}
\newcommand{\dd}{\mathrm{d}}
\newcommand{\model}{\mathfrak{M}}
\newtheorem{exemp}{Example}[section]
\newtheorem{proposition}{Proposition}
\newtheorem{theorem}{Theorem}
\newenvironment{example}
{\begin{exemp}\begin{em}}
{\end{em}\end{exemp}}
\newcommand\bx{\mathbf{x}}
\newcommand\MF{{\mathfrak{M}}}
\begin{document}

\begin{frontmatter}

\title{Bayesian hypothesis testing as a mixture estimation model\protect\thanksref{T1}}
\runtitle{Bayesian hypothesis testing via mixture models}
\thankstext{T1}{Kaniav Kamary, 
CEREMADE, Universit{\' e} Paris-Dauphine, 75775 Paris cedex 16, France, {\sf kamary@ceremade.dauphine.fr}, 
Kerrie Mengersen, QUT, Brisbane, QLD, Australia, {\sf k.mengersen@qut.edu.au}, 
Christian P. Robert and Judith Rousseau, CEREMADE, Universit{\' e} Paris-Dauphine, 
75775 Paris cedex 16, France {\sf xian,rousseau@ceremade.dauphine.fr}. Research partly supported by the Agence Nationale de la Recherche (ANR,
212, rue de Bercy 75012 Paris) through the 2012--2015 grant ANR-11-BS01-0010 ``Calibration'' and by two 2010--2015 and
2016--2021 senior chair grants of Institut Universitaire de France. Thanks to Phil O'Neall and Theo Kypriaos from the University of Nottingham for
their hospitality and a fruitful discussion that drove this research towards new pathways. We are also grateful to the
participants of BAYSM'14, ISBA 16, several seminars, and to the members of the Bristol Bayesian Cake Club for their comments.}

\begin{aug}
 \author{\snm{{\sc Kaniav Kamary}}}
 \affiliation{Universit{\'e} Paris-Dauphine, CEREMADE}
 \author{\snm{{\sc Kerrie Mengersen}}}
 \affiliation{Queensland University of Technology, Brisbane}
 \author{\snm{{\sc Christian P.~Robert}}}
 \affiliation{Universit{\'e} Paris-Dauphine, CEREMADE,~Dept. of Statistics, University of Warwick, and CREST, Saclay}
 \author{\snm{{\sc Judith Rousseau}}}
 \affiliation{University of Oxford, Dept.~of Statistics and Universit{\'e} Paris-Dauphine, CEREMADE}
\end{aug}

\begin{abstract} We consider a novel paradigm for Bayesian testing of hypotheses and Bayesian model comparison. 
Instead of the traditional comparison of posterior probabilities of the competing hypotheses, given the data, we consider 
the hypotheses as components of a mixture model. We therefore replace the original testing
problem with an estimation one that focus on the probability or weight of a given hypothesis
within a mixture model as the parameter of interest and the posterior distribution of this weight as the
outcome of the test. 
A major differentiating feature of this approach is that that generic improper priors are acceptable. 
For example, a reference Beta $\mathcal{B}(a_0,a_0)$ prior on the mixture weight parameter can be used for
the common problem of testing two contrasting hypotheses. In this case, the sensitivity of the posterior estimates of the
weights to the choice of $a_0$ vanishes as the sample size increases, leading to a consistent procedure and a suggested 
default choice of $a_0=0.5$. Another feature of this easily implemented alternative to the
classical Bayesian solution is that the speeds of convergence of the posterior mean of the weight and of the
corresponding posterior probability are quite similar.  \end{abstract}

\begin{keyword}
\kwd{Noninformative prior}
\kwd{Mixture of distributions}
\kwd{Bayesian analysis}
\kwd{testing statistical hypotheses}
\kwd{Dirichlet prior}
\kwd{Posterior probability}
\end{keyword}

\end{frontmatter} 
 
\section{Introduction}\label{intro}

\subsection{A open problem}

Statistical testing of hypotheses and the related model choice problem are central issues for statistical inference.
Perspectives and methods relating to these issues have been developed over the past two centuries, but it remains an
object of study and debate, in particular because the most standard approach based on $p$-values is open to misue and 
abuse, as highlighted by a recent ASA warning \citep{wasserstein:lazar:2016}, and also because the classical (frequentist) and Bayesian paradigms
\citep{neyman:pearson:1933b,jeffreys:1939,berger:sellke:1987,casella:berger:1987,gigerenzer:1991,berger:2003,
mayo:cox:2006,gelman:2008} are at odds, both conceptually and practically. 

From the perspectives of Neyman-Pearson and Fisher, tests are constructed as a competition beween so-called null and
alternative hypotheses, and typically evaluated with respect to their ability to control the type I error, i.e., the probability 
of falsely rejecting the null hypothesis in favour of the alternative. These procedures therefore handle the
two hypotheses differently, with an imbalance that makes the subsequent action of accepting the null hypothesis problematic.
The ASA statement \citep{wasserstein:lazar:2016} recommends against basing decisions solely on the $p$-value and advocates
the use of supplementary indicators such as those obtained by Bayesian methods.

However, from a Bayesian perspective, the handling of hypothesis testing is also
problematic, for different reasons \citep{jeffreys:1939,bernardo:1980,berger:1985,aitkin:1991,
berger:jefferys:1992,desantis:spezzaferri:1997,bayarri:garcia:2007,christensen:johnson:wesley:branscum:2011,johnson:rossel:2010,
gelman:carlin:stern:etc:2013,robert:2014}. In particular, we consider that the issue of non-informative Bayesian hypothesis testing is
still mostly unresolved both theoretically and in practice, despite having produced much debate and proposals;
witness the specific case of the Lindley or Jeffreys--Lindley paradox
\citep{lindley:1957,shafer:1982,degroot:1982,robert:1993b,lad:2003,spanos:2013,sprenger:2013,robert:2014} and 
discussions about pseudo-Bayes factors
\citep{aitkin:1991,berger:pericchi:1996,ohagan:1995,ohagan:1997,aitkin:2010,gelman:robert:rousseau:2013}. 

There are similar difficulties with Bayesian model selection. Several perspectives can be defended even
for the canonical problem of comparing models with the aim of choosing one of them. For instance, \cite{hoeting:etal:1999} propose a model
averaging approach; \cite{christensen:johnson:wesley:branscum:2011} argue that
this is a decision issue that pertains to testing;  \cite{robert:2001} express this as a model index estimation
setting; \cite{gelman:carlin:stern:etc:2013} prefer to rely on more exploratory predictive tools, and \citep{pc08,rockova:george:2015}
restrict the inference to a maximisation problem in a sparse model or variable selection context.

\subsection{The traditional Bayesian framework}

Under all the frameworks described above, a generally accepted perspective is that hypothesis testing and model selection do not primarily 
seek to identify which alternative or model is ``true" (if any). From a Bayesian perspective, hypotheses can be formulated as models and 
hypothesis testing can therefore be viewed as a form of model selection, in which the aim to compare several potential statistical models 
in order to identify the model that is most strongly supported by the data 
\cite[see, e.g.][]{berger:jefferys:1992,madigan:raftery:1994,balsubramanian:1997,mackay:2002,
consonni:forster:larocca:2013}. For this reason, we will use the terms hypotheses and models interchangeably in the context of 
hypothesis testing and model choice.

The most common approaches to Bayesian hypothesis testing in practice are posterior probabilities of the model given the data \citep[see, e.g.,][]{robert:2001},
the Bayes factor \citep{jeffreys:1939} and its approximations such as the Bayesian information criterion (BIC)
and the Deviance information criterion (DIC) \citep{schwarz:1978,csiszar:shields:2000,
spiegelhalter:best:carlin:linde:2002,forbes:2006,plummer:2008} and posterior predictive tools and their variants 
\citep{gelman:carlin:stern:etc:2013,vehtari:ojanen:2012}.
For example, consider two families of models, one for each of the hypotheses under comparison,
$$
\MF_1:\ x\sim f_1(x|\theta_1)\,,\ \theta_1\in\Theta_1 \quad\text{and}\quad
\MF_2:\ x\sim f_2(x|\theta_2)\,,\ \theta_2\in\Theta_2\,.
$$
Following \cite{berger:1985} and \cite{robert:2001}, a standard Bayesian approach is to 
associate with each of those models a prior distribution, 
$$
\theta_1\sim\pi_1(\theta_1) \quad\text{and}\quad \theta_2\sim\pi_2(\theta_2)\,,
$$
and to compute the marginal or integrated likelihoods
$$
m_1(x) = \int_{\Theta_1} f_1(x|\theta_1)\,\pi_1(\theta_1)\,\dd\theta_1 \quad\text{and}\quad m_2(x) = \int_{\Theta_2}
f_2(x|\theta_2)\,\pi_1(\theta_2) \,\dd\theta_2
$$
either through the Bayes factor or through the posterior probability, respectively: 
$$
\mathfrak{B}_{12} = \frac{m_1(x)}{m_2(x)}
, \quad \mathbb{P}(\MF_1|x) = \frac{\omega_1 m_1(x)}{\omega_1 m_1(x)+\omega_2 m_2(x)}, \quad \omega_1+\omega_2=1, \quad \omega_i\geq 0
$$
Note that the latter quantity depends on the prior weights $\omega_i$ of both models. 
The Bayesian decision step proceeds by comparing the Bayes factor $\mathfrak{B}_{12}$ to the
threshold value of one or comparing the posterior probability $\mathbb{P}(\MF_1|x)$ to a bound derived from a 0--1 loss
function or a ``golden" bound like $\alpha=0.05$ inspired from frequentist practice
\citep{berger:sellke:1987,berger:boukai:wang:1997, berger:boukai:wang:1999,berger:2003,ziliak:mccloskey:2008}. As a
general rule, when comparing more than two models, the model with the largest posterior probability is selected, but
this rule is highly dependent on the prior modelling, even with large datasets, which makes it hard to promote as the
default solution in practical studies.

\subsection{Our alternative proposal}

Given that the difficulties associated with the traditional handling of posterior probabilities for Bayesian testing
and model selection are well documented and comprehensively reviewed \citep{vehtari:lampinen:2002} and \cite{vehtari:ojanen:2012},
we do not replicate or dwell further on this discussion, nor do we attempt to resolve the attendant problems. 
Instead, we propose a different approach, which we argue provides a convergent and naturally interpretable solution, a
measure of uncertainty on the outcome, a wider range of prior modelling, and straightforward calibration tools. 
 
The proposed approach is described here in the context of a hypothesis test or
model selection problem with two alternatives.  We represent the distribution
of each individual observation as a two-component mixture between both models
$\MF_1$ and $\MF_2$. The resulting mixture model is thus an {\em encompassing
model}, as it contains both models under comparison as special cases. While
those special cases are extreme cases which weights are located at the
boundaries of the interval $(0,1)$, the posterior distribution of this weight parameter does 
concentrate on one of those boundaries for enough data from the corresponding
model. This concentration property follows from the results of \cite{rousseau:mengersen:2011},
in which they established that over-fitted mixtures (i.e., mixtures with more
components than are supported by the data) can be consistently estimated,
despite the parameter sitting on one (or several) boundary(ies) of the
parameter space. The outcome of our analysis is a posterior distribution 
over the parameters of the mixture, including the component weights. 

While mixtures are natural tools for classification and clustering, which can separate a sample into observations associated with each model, 
we argue that the posterior distribution on the weights provides a relevant indicator of the strength of support for each model given the data. 
Given a sufficiently large sample from model $\MF_1$, say, this distribution will almost surely concentrate near $0$. Starting
from this theoretical garantee, we can calibrate the degree of concentration near $0$ versus $1$ for the current sample
size by comparing the posterior distribution to posteriors associated with simulated samples from models $\MF_1$ and
$\MF_2$, respectively. Even though we fundamentally object to returning a
decision about ``the" model corresponding to the data \citep{mcshane:etal:2018}, and
thus would like to halt the statistician's input at returning the above posterior, it is furthermore straightforward to
produce a posterior median estimate of the component weight that can be compared with realisations from each model.
Quite obviously, the mixture posterior produced by a standard Bayesian analysis \citep{fruhwirth:2006} also provides
information about the model parameters and the presence of potential outliers in the data.

With regard to the classical approach to Bayesian hypothesis testing, this
mixture representation is not equivalent to the use of a posterior probability.
In fact, a posterior estimate of the mixture weight cannot be viewed as a proxy
to the numerical value of this posterior probability, which we do not see as a
worthwhile tool for testing for reasons given below.  As mentioned in the
previous paragraph, this new tool can be calibrated in its own right, while
allowing for a degree of uncertainty in the hypothesis evaluation, which is not
the case for the Bayes factor.  In particular, while posterior
probabilities are scaled against the $(0,1)$ interval, it can be argued
\citep{fraser:2011,fraser:wu:sun:2009,fraser:etal:2016} that they cannot be
taken at face value because of their lack of frequentist coverage and hence
need to be calibrated as well.  Furthermore, the mixture approach offers the
valuable feature of limiting the number of parameters in the model and hence is
in keeping with Occam's razor, see, e.g., \cite{jefferys:berger:1992,rasgha2001a}.

The plan of the paper is as follows. Section \ref{sec:zero} provides a description of the mixture model specifically
created for this setting and presents a simple example of its implementation.
Section \ref{sec:consix} expands \cite{rousseau:mengersen:2011} to provide conditions on the hyperameters of the mixture model 
that are sufficient to achieve convergence.
The performance of the mixture approach is then illustrated through three further examples in Section \ref{sec:firsT} and 
concluding remarks about the general applicability of the method are made in Section \ref{sec:quatr}.

\vspace{0.3cm}
\section{Testing problems as estimating mixture models}\label{sec:zero}

\subsection{A new paradigm for testing}\label{sec:newP}

Following from the above, given two classes of statistical models,
%$$
%\MF_1:\ x\sim f_1(x|\theta_1)\,,\ \theta_1\in\Theta_1 \quad\text{and}\quad
%\MF_2:\ x\sim f_2(x|\theta_2)\,,\ \theta_2\in\Theta_2\,,
%$$
$\MF_1$ and $\MF_2$, 
which may correspond to a hypothesis to be tested and its alternative, respectively,
it is always possible to embed both models within an encompassing mixture model
\begin{equation}\label{eq:mix}
\MF_\alpha:\ x\sim \alpha f_1(x|\theta_1) + (1-\alpha) f_2(x|\theta_2)\,,\ 0\le \alpha\le 1\,.
\end{equation}
Indeed, both models correspond to very special cases of the mixture model, one for $\alpha=1$
and the other for $\alpha=0$ (with a slight notational inconsistency in the indices).\footnote{The choice of possible
encompassing models is obviously unlimited: for instance, a Geometric mixture (Meng, 2016, personnal communication) 
$$
x\sim f_\alpha(x) \propto f_1(x|\theta_1)^\alpha\,f_2(x|\theta_2)^{1-\alpha}
$$
is a conceivable alternative. However, such alternatives are less practical to manage, starting with the issue of the
intractable normalizing constant. Note also that when $f_1$ and $f_2$ are Gaussian densities, the Geometric mixture
remains Gaussian for all values of $\alpha$. Similar drawbacks can be found with harmonic mixtures.}

When considering a sample $(x_1,\ldots,x_n)$ from one of the two models, the mixture representation still holds at the
likelihood level, namely the likelihood for each model is a special case of the weighted sum of both likelihoods.
However, this is not directly appealing for estimation purposes since it corresponds to a mixture with {\em a single
observation}.  See however \cite{o'neill:kypraios:2014} for a computational solution based upon this representation. 

What we propose in this paper is to draw inference on the individual mixture representation \eqref{eq:mix}, acting as if
each observation was individually and independently\footnote{Dependent observations like Markov chains can be modeled 
by a straightforward extension of \eqref{eq:mix} where both terms in the mixture are conditional on the relevant past observations.}
produced by the mixture model. Hence $\alpha$ represents the probability that a new observations is sampled from $f_1 $ 
belonging to model $\MF_1$. The approach proposed here therefore aims at answering the question: what is the proportion of the data which support one model, which has a definite predictive flavour to the testing problem. 

Here are five advantages we see about this approach.

First, if the data were indeed generated from model $\MF_1$, then the Bayesian estimate of the weight $\alpha$ and the posterior probability 
of model $\MF_1$ produce equally convergent indicators of preference for this model (see Section \ref{sec:consix}).
Moreover, the posterior distribution of $\alpha$ evaluates more thoroughly the
strength of the support for a given model than the single figure outcome of a Bayes factor or of a posterior
probability, while the variability of the posterior distribution on $\alpha$ allows for a more thorough assessment of the
strength of the support of one model against the other. Indeed, the approach allows for the possibility that, for a finite dataset,
one model, both models or neither model could be acceptable, as illustrated in Section \ref{sec:firsT}. 

Second, the mixture approach also removes the need for artificial prior probabilities on model indices, $\omega_1$ and $\omega_2$.
These priors are rarely discussed in a classical Bayesian approach, even though they linearly impact on the posterior
probabilities. Under the new approach, prior modelling only involves selecting an operational prior on $\alpha$, for intance a Beta 
$\mathcal{B}(a_0,a_0)$ distribution, with a wide range of acceptable values for the
hyperparameter $a_0$, as demonstrated in Section \ref{sec:consix}. While the value of $a_0$ impacts the posterior
distribution of $\alpha$, it can be argued that (a) it nonetheless leads to an accumulation of the mass near $1$ or $0$; 
(b) a sensitivity analysis on the impact of $a_0$ is straightforward to carry out; and (c) in most settings the approach 
can be easily calibrated by a parametric bootstrap experiment, so the prior predictive error can be directly estimated and
can drive the choice of $a_0$ if need be. 

Third, the problematic computation \citep{chen:shao:ibrahim:2000,marin:robert:2010} of the marginal likelihoods is bypassed, 
since standard algorithms are available for Bayesian mixture estimation \citep{richardson:green:1997,berkhof:mechelen:gelman:2003,
fruhwirth:2006,lee:marin:mengersen:robert:2008}. Moroever, the (simultaneously conceptual {\em and} computational) difficulty of 
``label switching" \citep{celeux:hurn:robert:2000, stephens:2000b,jasra:holmes:stephens:2005} that plagues both Bayesian 
estimation and Bayesian computation for most mixture models completely vanishes in this particular context, since components are 
no longer exchangeable in the current framework. In particular, we compute neither a Bayes factor\footnote{Using a Bayes factor to test for the number of 
components in the mixture \eqref{eq:mix} as in \cite{richardson:green:1997} would be possible. However, the outcome would fail to
answer the original question of selecting between both (or more) models.} nor a posterior probability related with the
substitute mixture model and we hence avoid the difficulty of recovering the modes of the posterior
distribution \citep{berkhof:mechelen:gelman:2003, lee:marin:mengersen:robert:2008,rodriguez:walker:2014}. Our perspective
is completely centred on estimating the parameters of a mixture model where both components are always identifiable.

Fourth, the extension to a finite collection of models to be compared is straightforward, as this simply involves a larger
number of components. The mixture approach allows consideration of all these models at once rather than engaging in 
costly pairwise comparisons. It is thus possible to eliminate the least likely models from simulations, since they will not be explored by the corresponding computational algorithm \citep{carlin:chib:1995,richardson:green:1997}.

Finally, while standard (proper and informative) prior modeling can be
painlessly reproduced in this novel setting, non-informative (improper) priors
are also permitted, provided both models under comparison are first
reparameterised so that they share parameters with common meaning. For
instance, in the special case when all parameters make sense in both
models,\footnote{While this may sound like an extremely restrictive requirement
in a traditional mixture model, we stress here that the presence of common meaning
parameters becomes quite natural within a testing setting. To wit, when
comparing two different models for the {\em same} data, moments like
$\mathbb{E}[X^\gamma]$ are defined in terms of the observed data and hence {\em
should} be the {\em same} for both models.  Reparametrising the models in terms
of those common meaning moments does lead to a mixture model with some and
maybe {\em all} common parameters.} the mixture model \eqref{eq:mix} can read
as
$$
\MF_\alpha:\ x\sim \alpha f_1(x|\theta) + (1-\alpha) f_2(x|\theta)\,,0\le \alpha\le 1\,.
$$
For instance, if $\theta$ is a location parameter, a flat prior $\pi(\theta)\propto 1$ can be used with
no foundational difficulty, in contrast to the traditional testing case \citep{degroot:1973,berger:pericchi:varshavsky:1998}.

\subsection{A Normal Example}\label{sec:eg1}
%\begin{example}\label{ex:NoNo}
In order to illustrate the proposed approach, consider a hypothesis test between a Normal $\mathcal{N}(\theta_1,1)$ 
and a Normal $\mathcal{N}(\theta_2,2)$ distribution.
We construct the mixture so that the same location parameter $\theta$ is used in both the Normal $\mathcal{N}(\theta,1)$ 
and the Normal $\mathcal{N}(\theta,2)$ distribution. This allows the use of Jeffreys' (1939) noninformative prior $\pi(\theta)=1$, 
in contrast with the corresponding Bayes factor. We thus embed the test in the mixture of Normal models, 
$\alpha \mathcal{N}(\theta,1)+(1-\alpha)\mathcal{N}(\theta,2)$,  and adopt a Beta $\mathcal{B}(a_0,a_0)$ prior on $\alpha$.
In this case, considering the posterior distribution on $(\alpha, \theta)$, conditional on the allocation vector $\zeta$, leads to
conditional independence between $\theta$ and $\alpha$: 
$$
\theta|{\mathbf x,\zeta} \sim \mathcal{N}\left(\frac{n_1\bar x_1 +.5n_2\bar
x_2}{n_1+.5n_2},\frac{1}{n_1+.5n_2}\right)\,,
\quad
\alpha|\mathbf{\zeta} \sim \mathcal{B}e(a_0+n_1,a_0+n_2)\,,
$$ 
where $n_i$ and $\bar x_i$ denote the number of observations and the empirical mean of the observations allocated to
component $i$, respectively (with the convention that $n_i\bar{x}_i=0$ when $n_i=0$. Since this conditional posterior
distribution is well-defined for every possible value of $\zeta$ and since the distribution $\zeta$ has a finite
support, $\pi(\theta|x)$ is proper.

Note that for this example, the conditional evidence $\pi(x|\zeta)$ can easily be
derived in closed form, which means that a random walk on the allocation space $\{1,2\}^n$ could be implemented. We did
not follow that direction, as it seemed unlikely such a random walk would have been more efficient than a
Metropolis--Hastings algorithm on the parameter space only.

In order to evaluate the convergence of the estimates of the mixture weights, we simulated $100$
$\mathcal{N}(0, 1)$ datasets. Figure \ref{normb2} displays the range of the posterior means and medians of $\alpha$ when
either $a_0$ or $n$ varies, showing the concentration effect (with a lingering impact of $a_o$) when $n$ increases.
We also included the posterior probability of $\MF_1$ in the comparison, derived from the Bayes factor
$$
\mathfrak{B}_{12} ={2^{\nicefrac{n-1}{2}}}\big/{\exp\nicefrac{1}{4}\sum_{i=1}^n(x_i-\bar{x})^2}\,,
$$
with equal prior weights, even though it is not formally well defined since it is based on an improper prior. The shrinkage of
the posterior expectations towards $0.5$ motivate the use the posterior median instead of the posterior mean as the relevant
estimator of $\alpha$. The same concentration phenomenon occurs for the $\mathcal{N}(0, 2)$ case, as illustrated in 
Figure \ref{fig:nornoraccu} for a single $\mathcal{N}(0, 2)$ dataset.
 
\begin{figure}[!h]
\includegraphics[width=.5\textwidth]{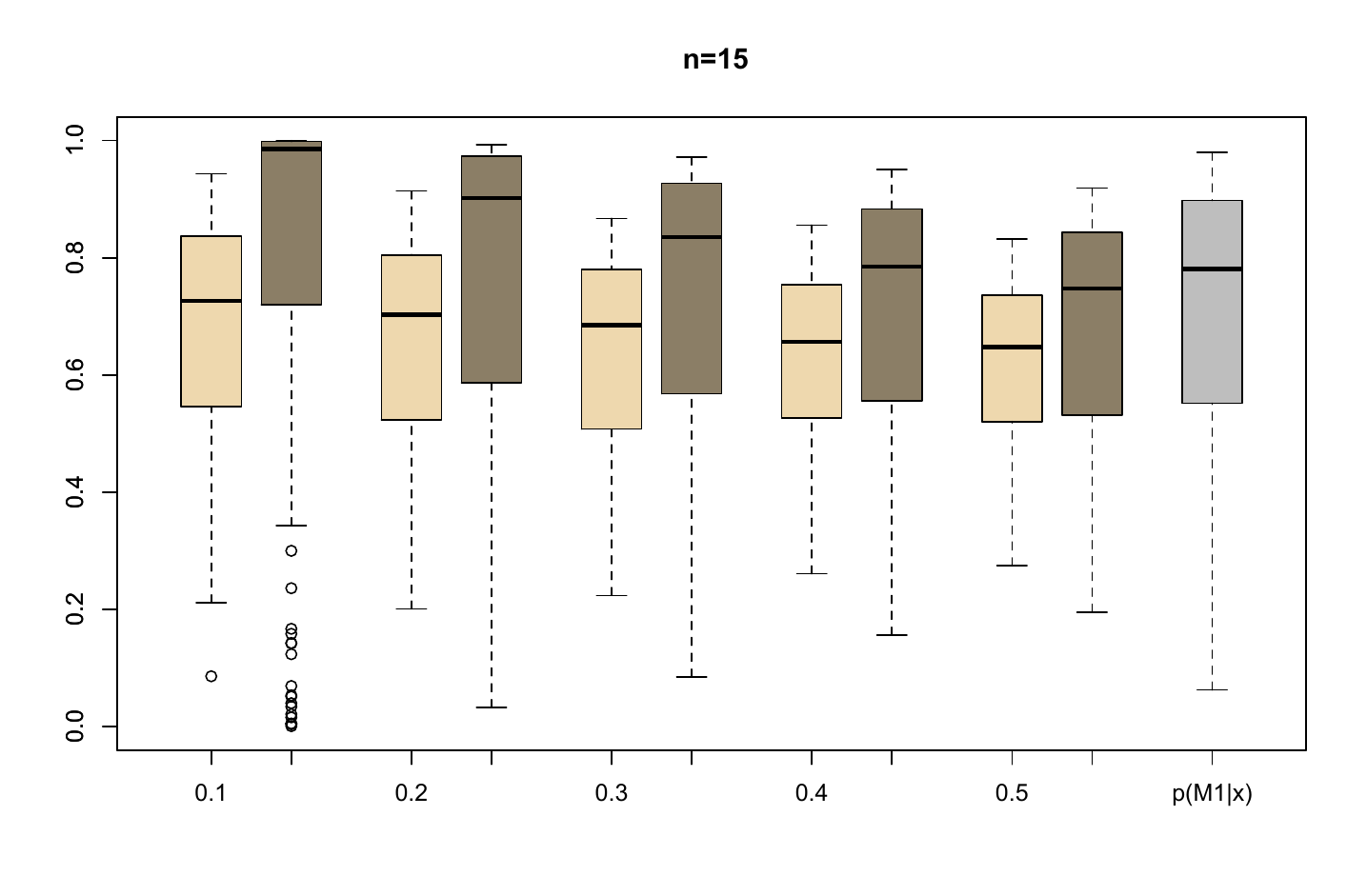}\includegraphics[width=.5\textwidth]{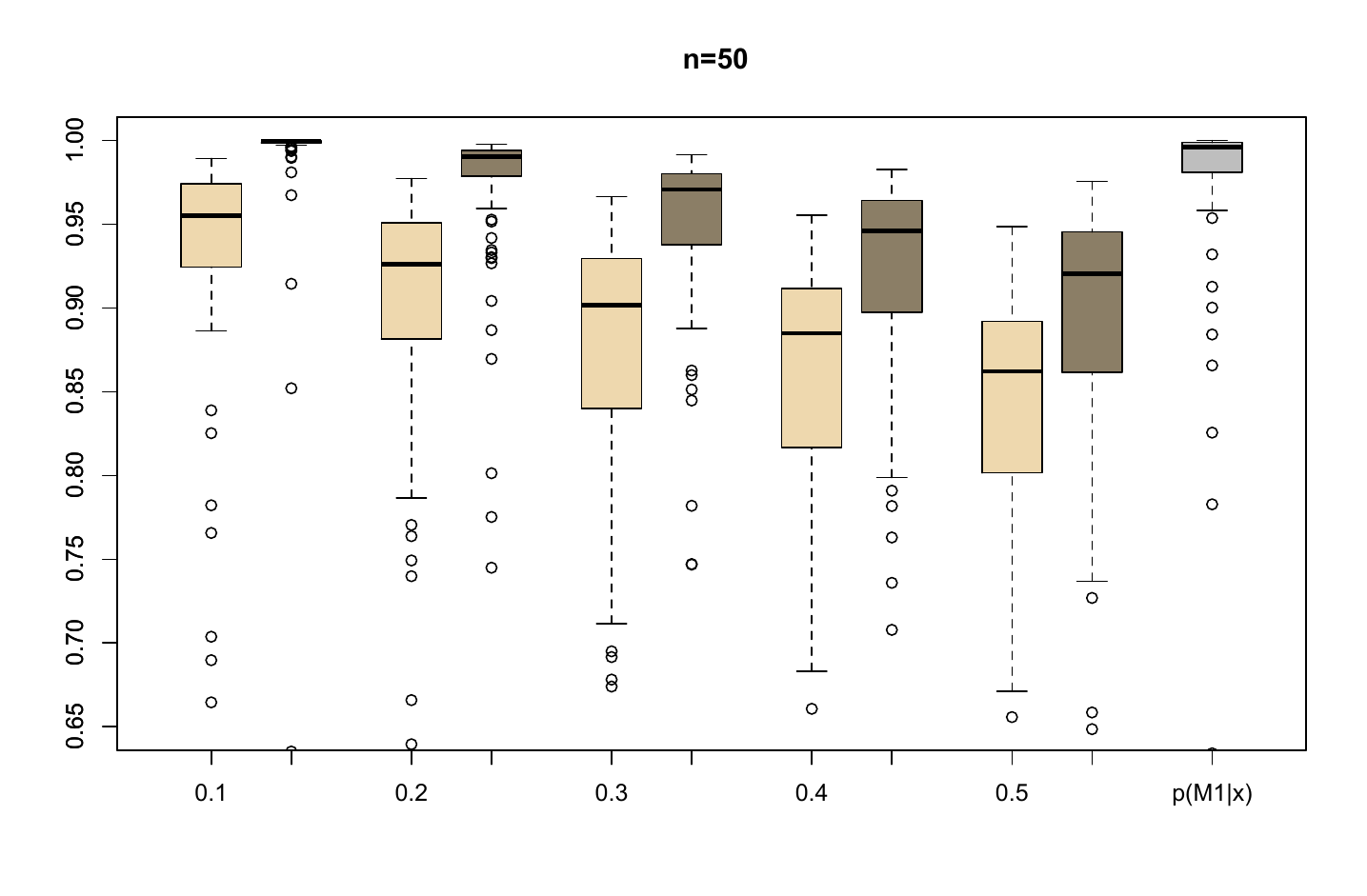}
\includegraphics[width=.5\textwidth]{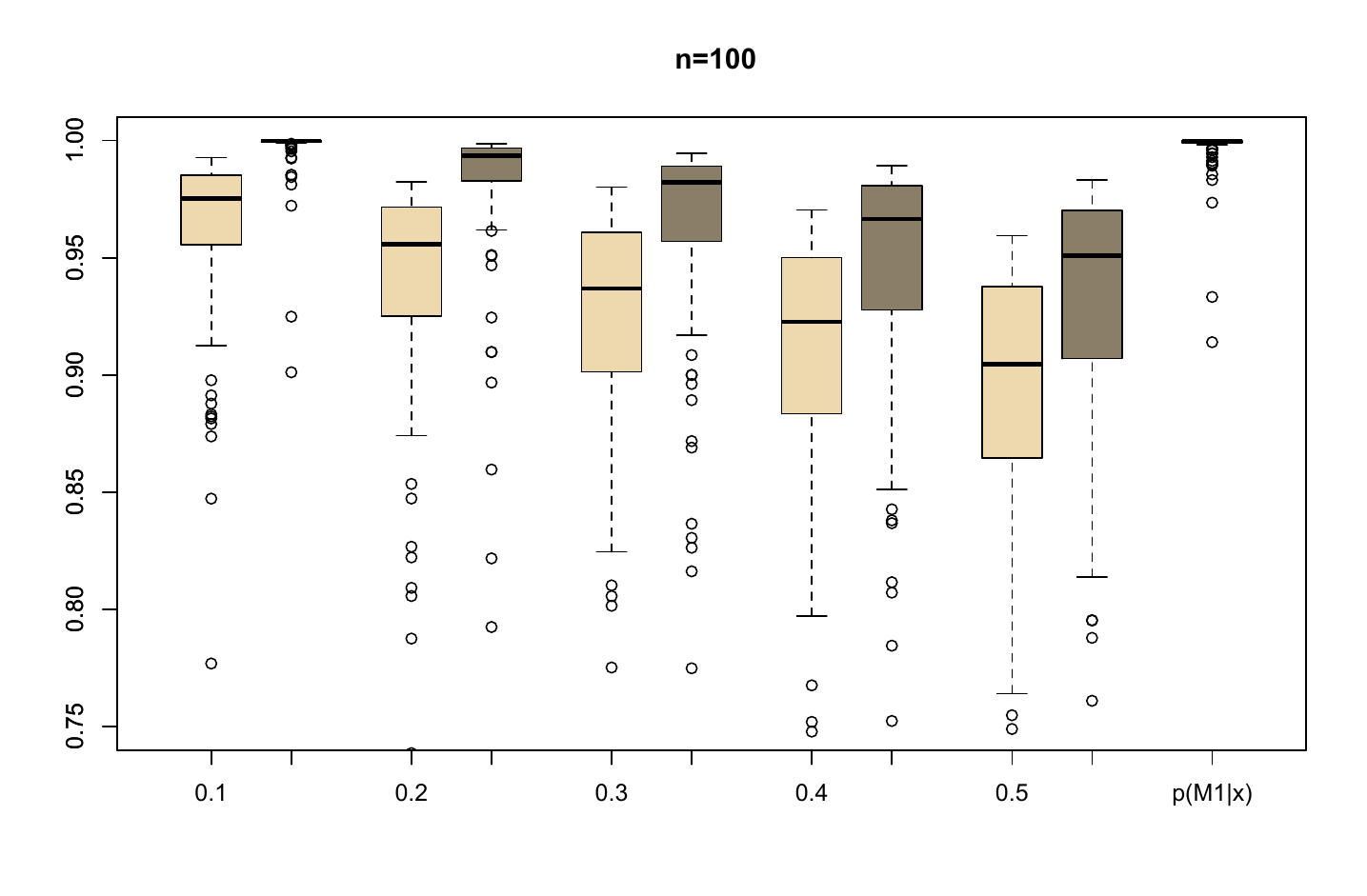}\includegraphics[width=.5\textwidth]{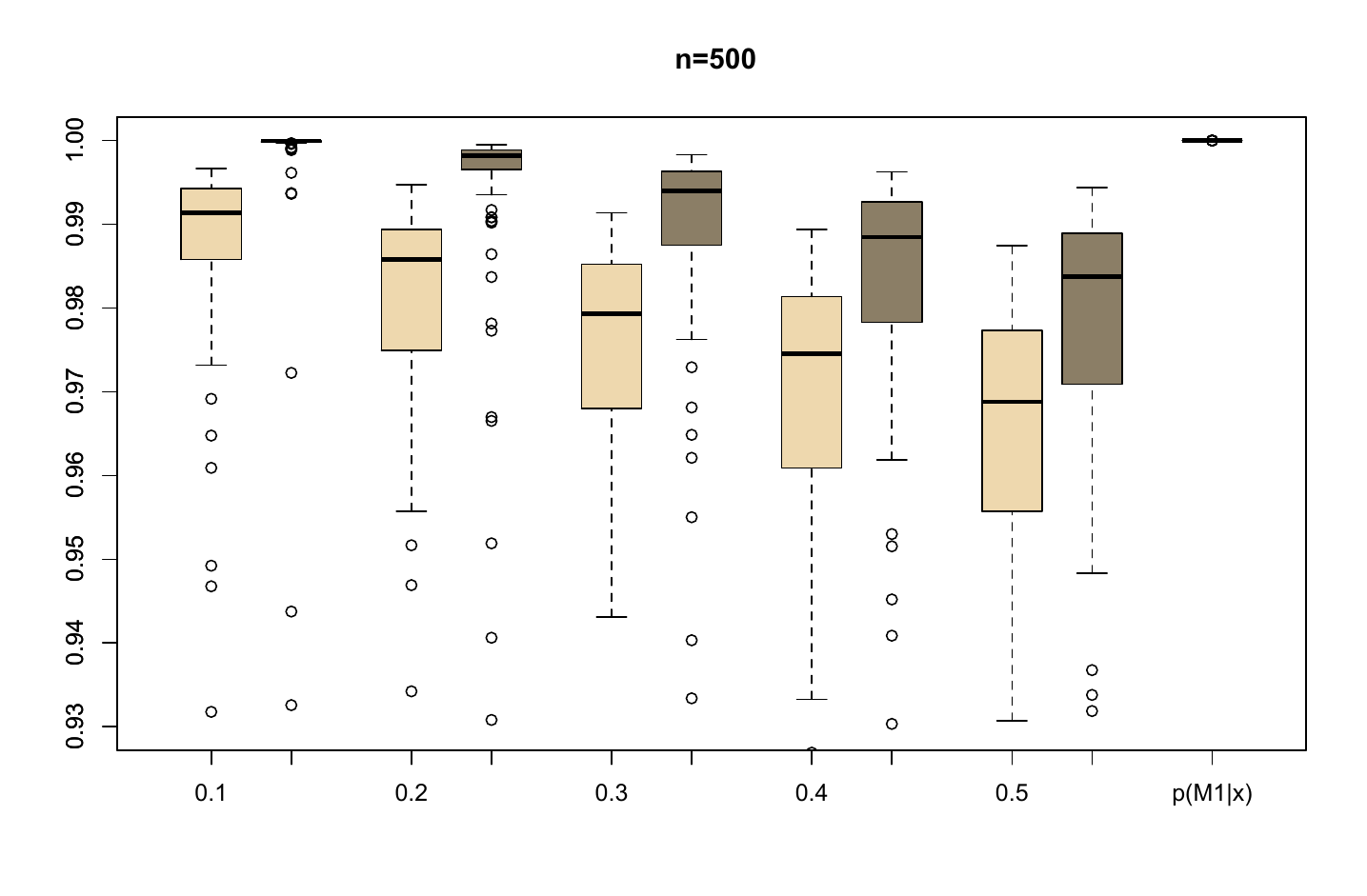}
\caption{\small {\bf Normal Example:} Boxplots of the posterior means {\em (wheat)} and medians of $\alpha$ {\em
(dark wheat)}, compared with a boxplot of the exact posterior probabilities of $\MF_0$ {\em (gray)} for a $\mathcal{N}(0, 1)$
sample, derived from 100 datasets for sample sizes equal to $15, 50, 100, 500$. Each posterior approximation is based on
$10^4$ MCMC iterations.}
\label{normb2}
\end{figure}

\begin{figure}
\includegraphics[width=.5\textwidth]{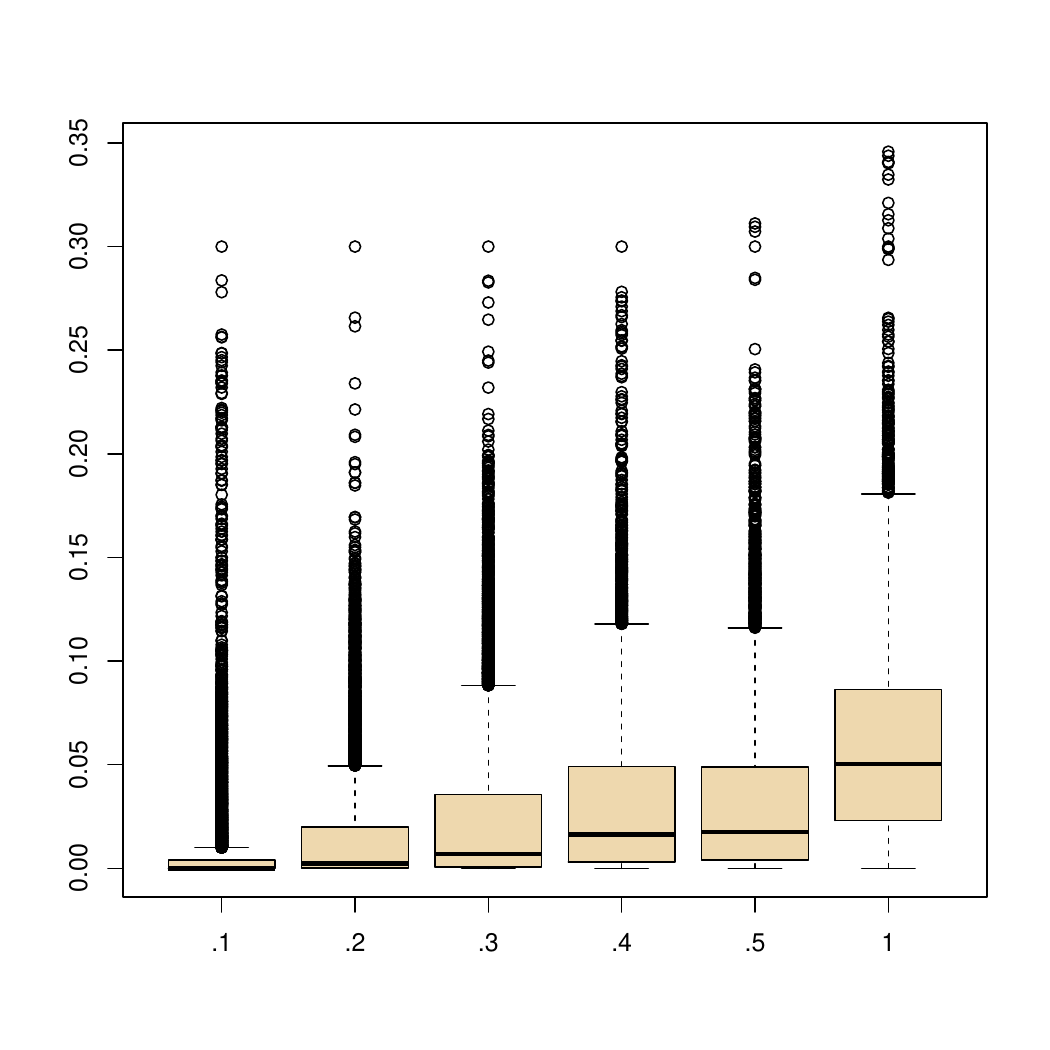}\includegraphics[width=.5\textwidth]{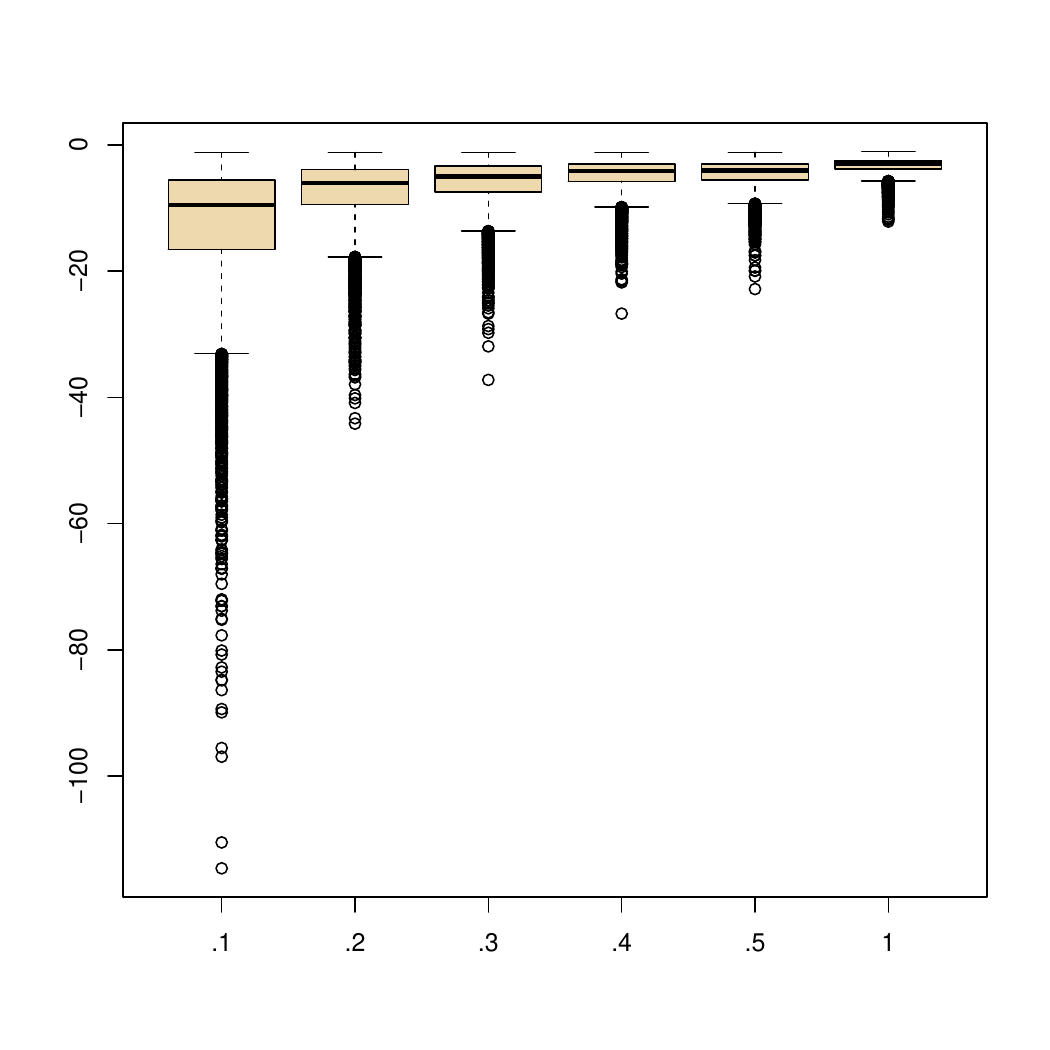}
\caption{\small \label{fig:nornoraccu}
{\bf Normal Example:} {\em (left)} Posterior distributions of the mixture weight $\alpha$ and {\em (right)} of
their logarithmic transform $\log\{\alpha\}$ under a Beta $\mathcal{B}(a_0,a_0)$ prior when $a_0=.1,.2,.3,.4,.5,1$ and
for a single Normal $\mathcal{N}(0,2)$ sample of $10^3$ observations. The MCMC outcome is based on $10^4$ iterations.}
\end{figure}
%\end{example}

\section{Asymptotic consistency}
\label{sec:consix}

In this Section we study the asymptotic properties of our mixture testing procedure.  More precisely we study the asymptotic behaviour of the posterior distribution of $\alpha$ and we prove that the posterior on $\alpha$ concentrates on the \textit{true} value of $\alpha$ in the sense that if model $\model_1$ is correct the posterior distribution concentrates on $\alpha=1$, if model $\model_2$ is correct then the posterior distribution concentrates on $\alpha = 0$  and if neither are correct then the posterior concentrate on the value of $\alpha $ which minimizes the Kullback-Leibler divergence. This shows that the posterior on $\alpha$ leads to a consistent testing procedure. Moreover we also study the separation rate associated to such procedure when the models are embedded and we show that the approach leads to optimal separation rate, contrarywise to the Bayes factor which has an extra $\sqrt{\log n} $ factor.  
 
To do so we  consider two different cases. In the first case, the two models,
$\model_1$ and $\model_2$, are well separated while, in the second case, model $\model_1$ is a submodel of $\model_2$.
We denote by  $\pi$ the prior distribution on $(\alpha, \theta)$ with $\theta=(\theta_1, \theta_2)$ and assume that $\theta_j \in \Theta_j
\subset \R^{d_j}$. We first prove that, under weak regularity conditions on each model, we can obtain posterior
concentration rates for the marginal density $f_{\theta,\alpha}(\cdot) = \alpha f_{1,\theta_1} (\cdot)+ (1- \alpha)
f_{2,\theta_2}(\cdot)$. Let $\mathbf{x}^n = ( x_1, \cdots , x_n) $ be a $n$ sample with true density $f^*$.

\begin{proposition}\label{prop:cons}
Assume that, for all $C_1 >0$, there exist $\Theta_n $ a subset of $\Theta_1\times \Theta_2$  and $B_0, B_1\geq 0$ such that 
\begin{equation}\label{cond:thetan}
\pi\left[ \Theta_n^c\right] \leq n^{-C_1}, \quad \Theta_n \subset \{ \| \theta_1\| + \|\theta_2\| \leq B_0n^{B_1}\}
\end{equation}
and that there exist $H\geq 0$ and $L, \delta>0$ such that, for $j=1, 2$,
\begin{equation}
\begin{split}
& \sup_{\theta, \theta' \in \Theta_n} \| f_{j,\theta_j} - f_{j,\theta_j^{'}}\|_1 \leq Ln^H \| \theta_j - \theta_j'\|,
\quad \theta = (\theta_1, \theta_2) , \, \theta^{'} = (\theta_1^{'}, \theta_2^{'})\,, \\
& \forall \|\theta_j -\theta_j^*\|\leq \delta; \quad KL(f_{j,\theta_j}, f_{j,\theta_j^*})\lesssim  \| \theta_j -
\theta_j^*\|\,.
\end{split}\label{cond:Lip}
\end{equation}
We then have that, when $f^* = f_{\theta^*,\alpha^*}$, with $\alpha^*\in [0,1]$, there exists $M>0$ such that 
\begin{equation*}
\pi\left[ (\alpha, \theta); \|f_{\theta, \alpha} - f^*\|_1 > M \sqrt{ \log n/n}  | \mathbf{x}^n \right] =o_p(1) \,.
\end{equation*}
\end{proposition}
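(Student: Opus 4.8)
The plan is to read this as a standard posterior concentration statement and verify the three hypotheses of the Ghosal--Ghosh--van der Vaart concentration theorem with rate $\epsilon_n=\sqrt{\log n/n}$, so that $n\epsilon_n^2=\log n$ plays the role of the effective dimension budget. Since the conclusion is phrased in $L_1$ whereas the cleanest theorems are phrased in Hellinger distance $d_H$, I would first record the elementary inequalities $d_H(f,g)^2\le\|f-g\|_1\le 2\sqrt2\,d_H(f,g)$. The right-hand bound shows that $\{\|f_{\theta,\alpha}-f^*\|_1>M\sqrt{\log n/n}\}\subseteq\{d_H(f_{\theta,\alpha},f^*)>M'\epsilon_n\}$ with $M'=M/2\sqrt2$, so it suffices to prove concentration at rate $\epsilon_n$ in Hellinger distance and then transfer; the left-hand bound will let me convert $L_1$ entropy estimates into Hellinger ones.

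For the entropy and sieve conditions I would use $\mathcal{F}_n=\{f_{\theta,\alpha}:(\theta_1,\theta_2)\in\Theta_n,\ \alpha\in[0,1]\}$. The identity $f_{\theta,\alpha}-f_{\theta',\alpha'}=\alpha(f_{1,\theta_1}-f_{1,\theta_1'})+(1-\alpha)(f_{2,\theta_2}-f_{2,\theta_2'})+(\alpha-\alpha')(f_{1,\theta_1'}-f_{2,\theta_2'})$, together with the Lipschitz bound in \eqref{cond:Lip} and $\|f_{j,\cdot}\|_1=1$, gives $\|f_{\theta,\alpha}-f_{\theta',\alpha'}\|_1\le Ln^H(\|\theta_1-\theta_1'\|+\|\theta_2-\theta_2'\|)+2|\alpha-\alpha'|$. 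Thus the parameter-to-density map is Lipschitz into $L_1$ with constant $O(n^H)$, and since $\Theta_n$ lies in a ball of radius $n^B$ in dimension $d_1+d_2$ by \eqref{cond:thetan}, a volumetric covering argument yields $\log N(u,\mathcal{F}_n,\|\cdot\|_1)\lesssim (d_1+d_2+1)\log(n^{B+H}/u)$. Applying $d_H\le\sqrt{\|\cdot\|_1}$ at $u=\epsilon_n^2$ gives $\log N(\epsilon_n,\mathcal{F}_n,d_H)\lesssim\log n=n\epsilon_n^2$, the entropy condition. The sieve-complement condition $\pi(\mathcal{F}_n^c)=\pi[\Theta_n^c]\le n^{-C_1}\le e^{-Cn\epsilon_n^2}$ for any prescribed $C$ is exactly the first bound in \eqref{cond:thetan} with $C_1$ chosen large.

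The heart of the argument, and the step I expect to be the main obstacle, is the prior-mass (Kullback--Leibler) condition $\pi(B_n)\gtrsim e^{-cn\epsilon_n^2}=n^{-c}$, where $B_n=\{(\theta,\alpha):KL(f^*,f_{\theta,\alpha})\le\epsilon_n^2,\ V_2(f^*,f_{\theta,\alpha})\le\epsilon_n^2\}$ and $V_2$ is the second moment of the log-ratio. The difficulty is that \eqref{cond:Lip} only controls the \emph{componentwise} divergences $KL(f_{j,\theta_j},f_{j,\theta_j^*})$, whereas I need the divergence of the \emph{mixture} from $f^*$, in the opposite direction, and a second-moment term that is not supplied directly. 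My plan is to exploit joint convexity of the Kullback--Leibler divergence: fixing first $\alpha=\alpha^*$ and writing $f^*$ and $f_{\theta,\alpha^*}$ as mixtures with common weights, convexity gives $KL(f^*,f_{\theta,\alpha^*})\le\alpha^*KL(f_{1,\theta_1^*},f_{1,\theta_1})+(1-\alpha^*)KL(f_{2,\theta_2^*},f_{2,\theta_2})=O(\|\theta-\theta^*\|)$ by the local smoothness underlying \eqref{cond:Lip}, while perturbing $\alpha$ adds a further $O(|\alpha-\alpha^*|)$ once likelihood ratios are controlled on a small neighbourhood. Restricting to $\|\theta-\theta^*\|+|\alpha-\alpha^*|\lesssim\epsilon_n^2$ then forces both $KL$ and $V_2$ below $\epsilon_n^2$, and since the prior density is positive and bounded near $(\theta^*,\alpha^*)$, such a neighbourhood carries prior mass $\gtrsim(\epsilon_n^2)^{d_1+d_2+1}=(\log n/n)^{d_1+d_2+1}\ge n^{-c}$ with $c=d_1+d_2+2$.

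I would then feed the three verified conditions into the concentration theorem to obtain $\pi(d_H(f_{\theta,\alpha},f^*)>M'\epsilon_n\mid\mathbf{x}^n)=o_p(1)$, and transfer this to the $L_1$ statement via the inclusion recorded at the outset. Two points inside the prior-mass step require care. First, the second-moment control $V_2\lesssim KL$ needs the likelihood ratio $f^*/f_{\theta,\alpha}$ to be bounded on the relevant neighbourhood, which must be drawn from the regularity of the two families rather than from \eqref{cond:Lip} alone. Second, the boundary regime $\alpha^*\in\{0,1\}$, where $f^*$ is one of the pure models, is special: the convexity bound then produces a cross-family term such as $KL(f_{1,\theta_1^*},f_{2,\theta_2})$, which is only $O(1)$ rather than small, so one must take $\alpha$ within $\epsilon_n^2$ of the boundary so that its prefactor $(1-\alpha)$ renders this contribution negligible. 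Both are routine under the smoothness implicit in the hypotheses, but they are where the genuine work of the proof lies.
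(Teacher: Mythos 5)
Your proposal follows exactly the route the paper intends: the paper's entire ``proof'' is the remark that Proposition \ref{prop:cons} is a direct consequence of Theorem 2.1 of \citet{ghosal:ghosh:vdv:00}, with conditions \eqref{cond:thetan} and \eqref{cond:Lip} serving to handle the non-compact parameter set, and your verification of the entropy, sieve-complement and Kullback--Leibler prior-mass conditions (including the Hellinger/$L_1$ transfer and the boundary case $\alpha^*\in\{0,1\}$) is precisely the omitted verification. The technical points you flag --- the direction of the componentwise KL bound and the missing second-moment control --- are real gaps in what the paper states explicitly, and your treatment of them is the natural way to complete the argument.
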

The proof of Proposition \ref{prop:cons} is a direct consequence of Theorem 2.1 of \citet{ghosal:ghosh:vdv:00} and is
thus omitted here.   Condition \eqref{cond:Lip} is a weak regularity condition on each of the candidate models.
Combined with condition \eqref{cond:thetan} it allows consideration of noncompact parameter sets in the usual way; see, for instance, 
\citet{ghosal:ghosh:vdv:00}. It is satisfied in all examples considered in this paper. We build on Proposition \ref{prop:cons} 
to describe the asymptotic behaviour of the posterior distribution on the parameters. It is possible to sharpen the above posterior 
concentration rate into $M_n/\sqrt{n}$ for any sequence  $M_n$ going to infinity by controlling the local entropy and obtaining precise 
upper bounds on neighbourhoods of  $f^*$. This is not useful in the case of separated models but becomes more important in the context 
of embedded models. Although we do not treat this here, following \cite{kleijn:vdv}, if the true distribution $f_0$ does not belong to the embedding model $f_{\theta,\alpha}$, then the posterior will concentrate on $f^*$ which minimizes the Kullback-Leibler divergence between $f_0$ and $f_{\theta,\alpha}$, at a similar rate. 

\subsection{The case of separated models}\label{subsec:asy:sep}

Assume that both models are separated in the sense that there is identifiability: 
\begin{equation}\label{ident}
\forall \alpha, \alpha'\in [0,1], \quad \forall \theta_j, \theta_j^{'}, \, j=1, 2\quad 
P_{\theta,\alpha} =P_{\theta^{'},\alpha^{'}}  \quad \Rightarrow \alpha = \alpha^{'}, \quad \theta = \theta^{'},
\end{equation}
where $P_{\theta, \alpha}$ denotes the distribution associated with $f_{\theta, \alpha}$. We assume that \eqref{ident}
also holds on the boundary of $\Theta_1\times \Theta_2$. In other words, the following
$$
\inf_{\theta_1\in \Theta_1}\inf_{\theta_2 \in \Theta_2} \| f_{1,\theta_1} - f_{2,\theta_2}\|_1 >0
$$
holds.  We also assume that, for all $\theta_j^* \in \Theta_j$, $j=1,2$,  if $P_{\theta_j} $ converges in the weak
topology to $P_{\theta_j^*}$, then $\theta_j $ converges in the Euclidean topology to $\theta_j^*$. The following result
then holds:

%%%%%%%%%%%%%%%%%%%%%
\begin{theorem} \label{cons:para}
Assume that \eqref{ident} is satisfied, together with \eqref{cond:thetan} and \eqref{cond:Lip}, then for all $\epsilon>0$ 
\begin{equation*}
\pi\left[ |\alpha - \alpha^*|>\epsilon  | \mathbf{x}^n \right] =o_p(1) .
\end{equation*}
In addition, assume that the mapping $\theta_j \rightarrow f_{j,\theta_j}$ is twice continuously differentiable in a
neighbourhood of $\theta_j^*$, $j =1,2$, and  that 
$$
f_{1,\theta_1^*}- f_{2,\theta_2^*}, \nabla f_{1,\theta_1^*}, \nabla f_{2,\theta_2^*}
$$
are linearly independent as functions of $y$ and that there exists $\delta>0$ such that
$$
\nabla f_{1,\theta_1^*}, \, \nabla f_{2,\theta_2^*}, \, \sup_{|\theta_1-\theta_1^*|<\delta}| D^2f_{1,\theta_1}|, \,
\sup_{|\theta_2-\theta_2^*|<\delta}| D^2f_{2,\theta_2}| \, \in L_1\,.
$$
Then  
\begin{equation} \label{rate}
\pi\left[ |\alpha - \alpha^*|>M\sqrt{\log n/n}  \big| \mathbf{x}^n \right] =o_p(1) .
\end{equation} 
\end{theorem}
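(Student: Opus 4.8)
The plan is to build entirely on Proposition~\ref{prop:cons}, which already confines the posterior mass, up to $o_p(1)$, to the $L_1$-ball $\{\|f_{\theta,\alpha}-f^*\|_1\le M\sqrt{\log n/n}\}$. Everything then reduces to a deterministic problem: translate closeness of the \emph{marginal density} $f_{\theta,\alpha}$ to $f^*$ into closeness of the \emph{weight} $\alpha$ to $\alpha^*$. The stochastic part of the argument is thus entirely outsourced to Proposition~\ref{prop:cons}, and the work is analytic.

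For the consistency assertion I would prove the following identifiability-at-a-distance statement: for every $\epsilon>0$,
$$
\inf\left\{\|f_{\theta,\alpha}-f^*\|_1 \;:\; |\alpha-\alpha^*|>\epsilon,\ \theta_1\in\Theta_1,\ \theta_2\in\Theta_2\right\}>0 \,.
$$
I would argue by contradiction and compactness: take $(\alpha_k,\theta_{1,k},\theta_{2,k})$ with $|\alpha_k-\alpha^*|>\epsilon$ and $\|f_{\theta_k,\alpha_k}-f^*\|_1\to 0$, then pass to a subsequence along which $\alpha_k\to\bar\alpha$ (so $|\bar\alpha-\alpha^*|\ge\epsilon$) and the component densities $f_{1,\theta_{1,k}}$, $f_{2,\theta_{2,k}}$ converge vaguely to sub-probability limits $\nu_1,\nu_2$. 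Since $f^*$ is a genuine probability density, $L_1$-convergence forbids any escape of mass, which forces $\nu_1,\nu_2$ to be probability measures and $\bar\alpha\nu_1+(1-\bar\alpha)\nu_2=P_{\theta^*,\alpha^*}$. The uniform separation $\inf_{\theta_1,\theta_2}\|f_{1,\theta_1}-f_{2,\theta_2}\|_1>0$ keeps the two families from merging, and one then shows the limits $\nu_j$ are of the respective model types, so that the stated condition (weak convergence of $P_{\theta_j}$ to a model point forces Euclidean convergence of $\theta_j$) identifies $\nu_j=P_{j,\bar\theta_j}$; the identifiability~\eqref{ident}, assumed to hold up to the boundary, finally yields $\bar\alpha=\alpha^*$, a contradiction. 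Combined with Proposition~\ref{prop:cons} this gives $\pi[|\alpha-\alpha^*|>\epsilon\mid\mathbf{x}^n]=o_p(1)$; the same reasoning simultaneously yields consistency of each $\theta_j$ for $\theta_j^*$, which I invoke below to localise.

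For the rate~\eqref{rate} I would work in a neighbourhood of $(\alpha^*,\theta^*)$, which is legitimate by the consistency just obtained, and expand to first order. Writing $\Delta\alpha=\alpha-\alpha^*$ and $\Delta\theta_j=\theta_j-\theta_j^*$, twice continuous differentiability gives
$$
f_{\theta,\alpha}-f^* = \Delta\alpha\,(f_{1,\theta_1^*}-f_{2,\theta_2^*}) + \alpha^*\,\nabla f_{1,\theta_1^*}\!\cdot\Delta\theta_1 + (1-\alpha^*)\,\nabla f_{2,\theta_2^*}\!\cdot\Delta\theta_2 + R \,,
$$
where the $L_1$-integrability of $\nabla f_{j,\theta_j^*}$ and of the local suprema of $|D^2 f_{j,\theta_j}|$ yields $\|R\|_1\lesssim \|\Delta\|^2$ with $\|\Delta\|=|\Delta\alpha|+\|\Delta\theta_1\|+\|\Delta\theta_2\|$. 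The hypothesised linear independence in $L_1$ of $f_{1,\theta_1^*}-f_{2,\theta_2^*}$, $\nabla f_{1,\theta_1^*}$ and $\nabla f_{2,\theta_2^*}$ makes the linear map carrying $\Delta$ to the leading term injective; being defined on a finite-dimensional space it is bounded below, so its image has $L_1$-norm at least $c\|\Delta\|$ for some $c>0$. Hence, once $\|\Delta\|$ is small enough to absorb $R$, one gets $\|f_{\theta,\alpha}-f^*\|_1\ge \tfrac{c}{2}\|\Delta\|$, and inverting this with Proposition~\ref{prop:cons} gives $|\alpha-\alpha^*|\le\|\Delta\|\lesssim \|f_{\theta,\alpha}-f^*\|_1\lesssim\sqrt{\log n/n}$ on a set of posterior probability $1-o_p(1)$, which is exactly~\eqref{rate}.

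The main obstacle is the deterministic separation lemma of the second paragraph, and specifically the control of the non-compact parameter sets: ruling out escape of mass to infinity and guaranteeing that the vague limits of the components remain within their own model families is the delicate step, for which the uniform $L_1$-separation of the two families and the properness of $f^*$ are essential. A secondary subtlety, which is nonetheless relevant since the boundary values $\alpha^*\in\{0,1\}$ occur in the examples, is that there the vanishing-weight component drops out: its parameter is not identified and the corresponding gradient term disappears from the linearisation (its contribution to $R$ being multiplied by $1-\alpha=|\Delta\alpha|$ and hence harmlessly absorbed), so the linear-independence requirement is correspondingly reduced to the retained functions, while the lower bound on $|\Delta\alpha|$ still survives because $f_{1,\theta_1^*}-f_{2,\theta_2^*}$ stays linearly independent of the remaining gradient.
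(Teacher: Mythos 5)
Your proposal is correct and follows essentially the same route as the paper's own proof: posterior concentration from Proposition~\ref{prop:cons}, a weak-compactness/no-escape-of-mass subsequence argument combined with \eqref{ident} for consistency, and a first-order Taylor expansion whose leading linear map is bounded below via the linear-independence hypothesis and compactness of the unit sphere, yielding $|\alpha-\alpha^*|\lesssim\|f_{\theta,\alpha}-f^*\|_1\lesssim\sqrt{\log n/n}$. Your treatment of the boundary case $\alpha^*\in\{0,1\}$ (dropping the unidentified parameter's gradient from the linearisation and retaining the separation term to control $|\Delta\alpha|$) is also exactly the paper's device of expanding with the vanishing component's parameter frozen at its current value.
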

%%%%%%%%%%%%%%%%%%%%%%%
Theorem \ref{cons:para} allows for the interpretation of the quantity $\alpha$ under the posterior distribution. In
particular, if the data $\mathbf x^n$ are generated from model $\model_1$ (resp. $\model_2$), then the posterior
distribution on $\alpha$ concentrates around $\alpha=1$ (resp. around $\alpha=0$), which establishes the consistency of
our mixture approach.

We now consider the embedded case. 

\subsection{Embedded case}

In this Section we assume that $\model_1 $ is a submodel of $\model_2$, in the sense that $\theta_2 = (\theta_1, \psi)$
with $\psi \in \mathcal S \subset \R^{d_\psi}$ and that $f_{2, \theta_2} \in \model_1$ when $\theta_2 = (\theta_1, \psi_0)$ for
some given value $\psi_0$, say $\psi_0=0$. Condition \eqref{ident} is no longer verified for all $\alpha$'s: we assume
however that it is verified for all $\alpha, \alpha^*\in [0,1)$ and that $\theta_2^*  = (\theta_1^*,\psi^*)$ satisfies
$\psi^*  \neq 0$. In this case, under the same conditions as in Theorem \ref{cons:para}, we immediately obtain the
posterior concentration rate $\sqrt{\log n/n}$ for estimating $\alpha$ when $\alpha^*\in [0,1)$ and $\psi^*\neq 0$ and Theorem \ref{cons:para} implies that \eqref{rate} holds, which in turns implies that if $\alpha^* = 1$, i.e. if the distribution comes from model $\model_2$, 
$$ \pi\left[  \alpha >M\sqrt{\log n/n}  | \mathbf{x}^n \right] =o_p(1). $$ 

We now treat the case where $\psi^*=0$; in other words, $f^*$ is in model $\model_1$.  

As in \citet{rousseau:mengersen:2011}, we consider both possible paths to approximate $f^*$: either $\alpha$ goes to
1 or $\psi$ goes to $\psi_0=0$. In the first case, called path 1, $(\alpha^*,\theta^*) = (1, \theta_1^*, \theta_1^*, \psi)$ 
with $\psi \in \mathcal S$; in the second, called path 2, $(\alpha^*, \theta^*) = (\alpha, \theta_1^*, \theta_1^*, 0)$ with 
$\alpha \in [0,1]$. In either case, we write $P^*$ as the distribution and denote $F^*g = \int f^*(x) g(x) d\mu(x) $ for any
integrable function $g$. For sparsity reasons, we consider the following structure for the prior on $(\alpha, \theta)$: 
$$
\pi( \alpha, \theta) = \pi_\alpha(\alpha) \pi_1( \theta_1) \pi_\psi(\psi) , \quad \theta_2 =  (\theta_1, \psi).
$$
This means that the parameter $\theta_1$ is common to both models, i.e., that $\theta_2$ shares the parameter $\theta_1$
with $f_{1, \theta_1}$. 

Condition \eqref{ident} is replaced by 
\begin{equation} \label{identbis}
P_{\theta, \alpha} = P^* \quad \Rightarrow \alpha = 1, \quad \theta_1 = \theta_1^*, \quad \theta_2 = (\theta_1^*, \psi) \quad \mbox{or } \quad \alpha \leq 1 , \quad \theta_1 = \theta_1^*, \quad \theta_2 = (\theta_1^*, 0)
\end{equation}
Let $\Theta^*$ be the above parameter set.  
 
As in the case of separated models, the posterior distribution concentrates on $\Theta^*$. We now describe more
precisely the asymptotic behaviour of the posterior distribution, using \cite{rousseau:mengersen:2011}. We cannot apply
directly Theorem 1 of \cite{rousseau:mengersen:2011}, hence the following result is an adaptation of it. We require the
following assumptions with $f^* = f_{1, \theta_1^*}$. For the sake of simplicity, we assume that $\Theta_1$ and
$\mathcal S$ are compact. Extension to non compact sets can be handled similarly to \cite{rousseau:mengersen:2011}. 

\begin{itemize}
\item [B1] \textit{Regularity}: 
Assume that $\theta_1\rightarrow  f_{1,\theta_1}$ and $\theta_2 \rightarrow f_{2, \theta_2}$  are 3 times continuously differentiable and that 
 \begin{equation*}
 F^*\left( \frac{ \bar f_{1, \theta_1^*}^3 }{ \underline f_{1, \theta_1^*}^3   } \right)<+\infty , \quad \bar f_{1, \theta_1^*} = \sup_{|\theta_1- \theta_1^*| <\delta} f_{1, \theta_1}, \quad  \underline f_{1, \theta_1^*} = \inf_{|\theta_1- \theta_1^*| <\delta} f_{1, \theta_1}
 \end{equation*}
 \begin{equation*}
 \begin{split}
 & F^*\left( \frac{ \sup_{|\theta_1- \theta_1^*| <\delta} |\nabla f_{1, \theta_1^*} |^3  }{ \underline f_{1, \theta_1^*}^3  } \right)<+\infty ,\quad  F^*\left( \frac{  |\nabla f_{1, \theta_1^*} |^4  }{ f_{1, \theta_1^*}^4  } \right)<+\infty ,\\
 & F^*\left( \frac{ \sup_{|\theta_1- \theta_1^*| <\delta} |D^2 f_{1, \theta_1^*} |^2  }{ \underline f_{1, \theta_1^*}^2  } \right)<+\infty , \quad    F^*\left( \frac{ \sup_{|\theta_1- \theta_1^*| <\delta} |D^3 f_{1, \theta_1^*} |  }{ \underline f_{1, \theta_1^*}  } \right)<+\infty
 \end{split}
 \end{equation*}

\item [B2] \textit{Integrability}: There exists $\mathcal S_ 0 \subset \mathcal S \cap\{ |\psi| > \delta_0\}$, for some positive $\delta_0$ and  satisfying $\mbox{Leb}(\mathcal S_0) >0$, and such that for all $\psi \in \mathcal S_0$,
$$F^*\left( \frac{ \sup_{|\theta_1-\theta_1^*|<\delta} f_{2, \theta_1,\psi}}{ f_{1,\theta_1^*}^4}\right) < +\infty,  \quad F^*\left( \frac{\sup_{|\theta_1-\theta_1^*|<\delta} f_{2, \theta_1,\psi}^3 }{  \underline{f}_{1,\theta1^*}^3 } \right) <+\infty,$$

\item [B3] \textit{Stronger identifiability} : 
Set 
$$ 
\nabla f_{2, \theta_1^*,\psi^*}(x) = \left( \nabla_{\theta_1} f_{2, \theta_1^*,\psi^*}(x)^{\text{T}} , \nabla_\psi f_{2, \theta_1^*,\psi^*}(x)^{\text{T}}\right)^{\text{T}} .
$$ 
Then for all $\psi \in \mathcal S$ with  $\psi\neq 0$, if $\eta_0 \in \R$, $\eta_1 \in \R^{d_1} $
\begin{equation} \label{cond:strong:ident}
 \eta_0 ( f_{1, \theta_1^*} - f_{2, \theta_1^*, \psi} ) + \eta_1^{\text{T}}[ \nabla_{\theta_1}f_{1, \theta_1^*} -  \nabla_{\theta_1} f_{2, \theta_1^*,\psi}(x)] = 0 \quad \Leftrightarrow \eta_1 = 0, \, \eta_2 = 0 
\end{equation}
  
\end{itemize}

Assumptions B1-B3 are  similar, but weaker, to \cite{rousseau:mengersen:2011}'s set of conditions and in fact B3 is milder than the strong identifiability condition imposed in that paper. Hence these conditions are satisfied for a wide range of regular models.

We can now state the main theorem:
\begin{theorem}\label{th:cons:embed}
Given the model
%Consider the posterior distributions associated to the model  for the $n$ sample $\mathbf x^n =(x_1,\cdots, x_n)$ 
$$
f_{\theta_1, \psi, \alpha} = \alpha f_{1, \theta_1} + (1- \alpha)f_{2, \theta_1, \psi}, 
$$
assume that the data comprise the $n$ sample $\mathbf x^n =(x_1,\cdots, x_n)$ issued from $f_{1, \theta_1^*}$ for some
$\theta_1^*\in \Theta_1$, and that assumptions $B1-B3$ are satisfied. Then for all sequence $M_n$ going to infinity,
\begin{equation} \label{rate:sharp}
\pi\left[ (\alpha, \theta); \|f_{\theta, \alpha} - f^*\|_1 > M_n /\sqrt{ n}  | \mathbf{x}^n \right] =o_p(1) .
\end{equation}
If the prior $\pi_\alpha$ on $\alpha$ is a Beta $\mathcal{B}(a_1,a_2)$ distribution, with $a_2 < d_\psi$, and if the prior
$\pi_{1}\pi_{\psi}$ is absolutely continuous with positive and continuous density at $(\theta_1^*, 0)$, then for all
$M_n$ going to infinity, 
$$%\begin{equation} \label{rate:2}
\pi\left[ |\alpha - 1 |>M_n /\sqrt{n} | \mathbf{x}^n \right] =o_p(1) .% \quad \gamma = \max( (d_1+ a_2)/(d_2-a_2), 1)/2,
$$%\end{equation} 
If $a_2> d_\psi$, then for any $e_n= o(1)$, 
$$%\begin{equation*} 
\pi\left[ |\alpha - 1 | < e_n | \mathbf{x}^n \right] =o_p(1) . % \quad \mbox{ when } \quad f^*  = f_{2, \theta_1^*, \psi^*}, \quad \psi^* \neq 0% \quad \gamma = \max( (d_1+ a_2)/(d_2-a_2), 1)/2,
$$%\end{equation*} 

\end{theorem}

Note that  the phase transition on the behaviour of the posterior distribution is $a_2<d_\psi$ versus $a_2>d_\psi$, which is not quite the same as in \cite{rousseau:mengersen:2011}. 

Theorems \ref{th:cons:embed} and \ref{cons:para} imply that testing decisions can be taken based on the posterior distribution of $1-\alpha$ when $a_2<d_\psi$.
Indeed, in this case if one considers a testing approach of the form: $H_0$ is rejected if $\pi(1-\alpha > M_n /\sqrt{n} | \mathbf x^n )\geq 1/2$
for some sequence $M_n$ large or increasing to infinity, then this testing procedure is consistent under both the null and the alternative. 

In contrast to the Bayes factor which converges to 0 under the alternative model $\model_2$ exponentially quickly, 
the convergence rate of $\alpha$ to $\alpha^*\neq 1$ is of order $1/\sqrt{n}$. However this does not mean that the separation rate 
of the procedure based on the mixture model is worse than that of the Bayes factor. On the contrary, while it is well known that the Bayes factor leads to a separation rate of order $\sqrt{\log n}/\sqrt{n}$ in parametric models, we show in the following theorem that our approach can lead to a testing procedure with a better separation rate of order $1/\sqrt{n}$. 

To prove the following result we need to strengthen slightly assumption B3:\\ 
 B4 \textit{second order identifiability condition} : \\
Set 
$  D_\psi^2 f_{2,\theta_1,0}$ as the second derivative of $f_{2,\theta_1,\psi}$ with respect to $\psi$ calculated at $\theta = (\theta_1,0)$ 
Then for all $\theta_1 \in \Theta_1$ if  $\eta_1 \in \R^{d_1}, \eta_2, \eta_3 \in \mathbb R^{d_\psi} $
\begin{equation} \label{cond:strong:identbis}
\begin{split}
 \eta_1^{\text{T}}\nabla_{\theta_1}f_{1, \theta_1} + \eta_2^{\text{T}}\nabla_\psi f_{2, \theta_1,0}(x) + \eta_3^{\text{T}}D_\psi^2 f_{2,\theta_1,0} \eta_3= 0 \quad \Leftrightarrow \eta_1 = 0, \, \eta_2 =\eta_3 &= 0\\
\end{split}
\end{equation}

Note that condition B4 is very similar to the strong identifiability condition of \cite{rousseau:mengersen:2011}. 

\begin{theorem}\label{th:separationrate}
Given the model
%Consider the posterior distributions associated to the model  for the $n$ sample $\mathbf x^n =(x_1,\cdots, x_n)$ 
$$
f_{\theta, \alpha} = f_{\theta_1, \psi, \alpha} = \alpha f_{1, \theta_1} + (1- \alpha)f_{2, \theta_1, \psi}, \quad \theta = (\theta_1, \psi)
$$
assume that the data comprise the $n$ sample $\mathbf x^n =(x_1,\cdots, x_n)$ issued from $f_n^* = f_{2, \theta_{1,n}, \psi_n}$ for some
 some sequence $\theta_{1,n} \in \Theta_1$ and $\psi_n \in \mathcal S$ Let assumptions $B1-B4$ be satisfied.
Moreover if the prior
$\pi_{\theta_1, \psi}$ is absolutely continuous with positive and continuous density on $\Theta$ and if the prior $\pi_\alpha$ on $\alpha$ is a Beta $\mathcal{B}(a_1,a_2)$ distribution then there exists $M'>0$ such that 
$$%\begin{equation} \label{rate:2}
\sup_{\theta_{1,n} \in \Theta_1, \|\psi_n\| \geq M_n/\sqrt{n}} E_{\theta_{1,n},\psi_n} \pi\left[ |\alpha - 1 | \leq M' M_n^2/\sqrt{n} | \mathbf{x}^n \right] =o(1) 
$$%\end{equation} 
for any sequence $M_n $ going to infinity such that $M_n^2 = o(\sqrt{n})$. 

\end{theorem}

Theorem \ref{th:separationrate} implies in particular that if the testing
procedure is: $H_0$ is rejected as soon as $\pi(1-\alpha > M_0
/\sqrt{n}|\mathbf x^n)\leq 1/2 $ with $M_0$ an arbitrarily large constant then
the separation rate is of order $\sqrt{M_0}/\sqrt{n}$.  Although Theorem
\ref{th:separationrate} holds for any value of $a_2$ and $d_\psi$, for the
testing procedure to make sense one needs to choose $a_2 < d_\psi$, since,
otherwise, for any $e_n = o(1)$, the posterior distribution $\pi(1-\alpha >
e_n|\mathbf x^n)= o_p(1)$ under $H_0$. Calibrating the procedure by a prior
predictive approach under both  $H_0$ and $H_1$ will lead to a consistent
testing procedure.

%\textbf{ Shall we elaborate a little more on this? :Calibrating the procedure by a prior predictive  approach under both  $H_0$ and $H_1$ will lead to a consistent testing procedure. }

\section{Illustrations}\label{sec:firsT}

In this Section, we present three further examples that demonstrate the performance of the mixture estimation approach
and provide confirmation of the consistency results obtained in Section \ref{sec:consix}. 
The first follows from the example given in Section \ref{sec:eg1} and is a direct application of Theorem \ref{cons:para}.
The second is cast in a nonparametric setting and is an application of Theorem \ref{th:cons:embed}.
The third example is a case study that illustrates the hypothesis testing approach in a regression setup.

\vspace{0.3cm}
\begin{example}\label{ex:NoLap}
Inspired by \cite{marin:pillai:robert:rousseau:2014}, we oppose the Normal
$\mathcal{N}(\mu,1)$ model to the double-exponential
$\mathcal{L}(\mu,\sqrt{2})$ model. The scale $\sqrt{2}$ is intentionally chosen
to make both distributions share the same variance. As in the Normal case of
Section \ref{sec:eg1}, the location parameter $\mu$ can be shared by both
models and allows for the use of the flat Jeffreys' prior. As in the example in
Section \ref{sec:eg1}, Beta distributions $\mathcal{B}(a_0,a_0)$ are compared
with respect to their hyperparameter $a_0$. However, whereas in the previous
example we illustrated that the posterior distribution of the weight of the
true model converged to $1$, we now consider a setting in which neither model
is correct. We achieve this feature by using a
$\mathcal{N}(0, .7^2)$ distribution to simulate the data as it corresponds to
neither model $\mathfrak{M}_1$ nor to model $\mathfrak{M}_2$. In this specific
case, both posterior means and medians of $\alpha$ fail to concentrate near $0$
and $1$ as the sample size increases, as shown in Figure \ref{norlap1}. Thus in
the majority of cases in this experiment, the outcome indicates that neither of
both models is favored by the data. This example does not exactly follow the
assumptions of Theorem \ref{cons:para} since the Laplace distribution is not
differentiable everywhere. However, it is both almost surely differentiable and
differentiable in quadratic mean, so we expect to see the same types of
behaviour as predicted by Theorem \ref{cons:para}.

\begin{figure}[!h]
\includegraphics[width=.33\textwidth]{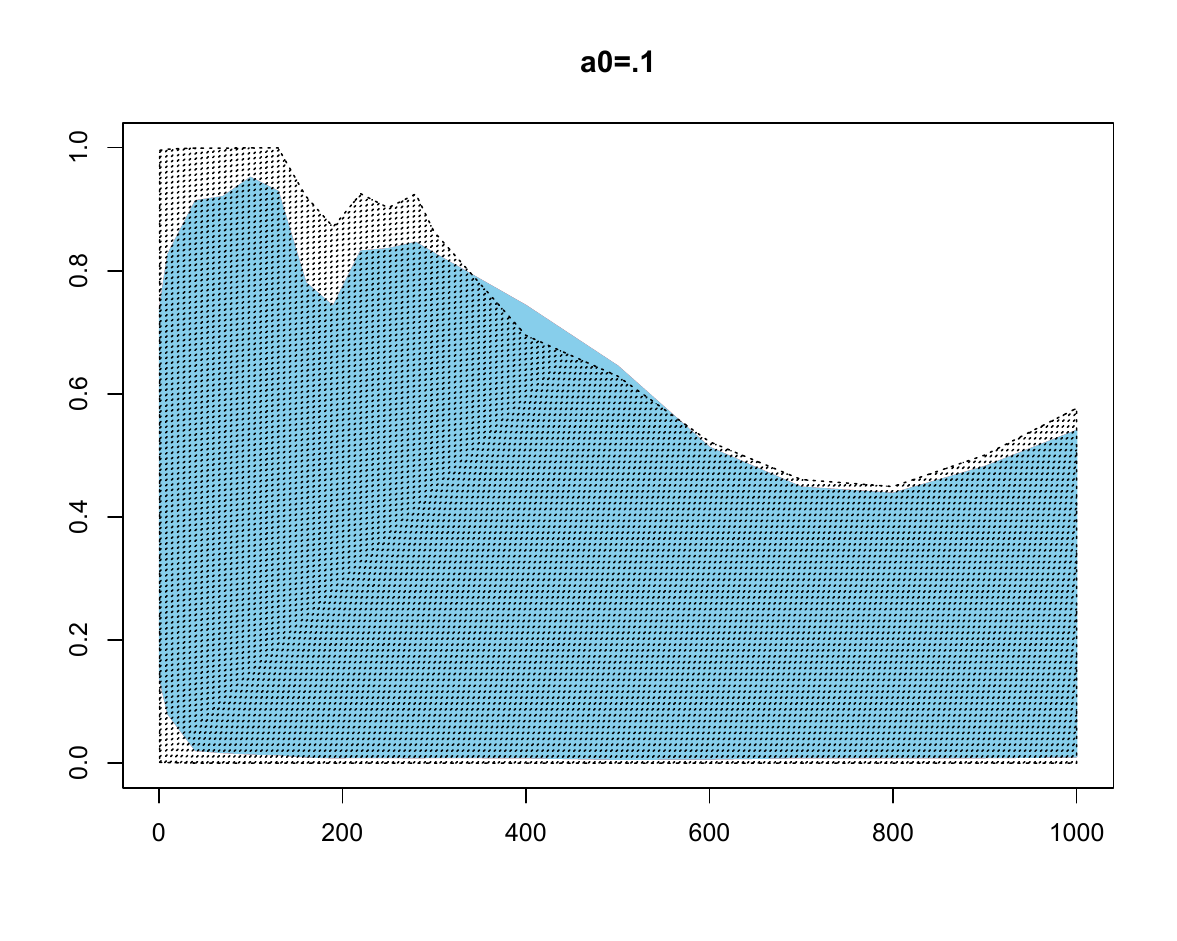}\includegraphics[width=.33\textwidth]{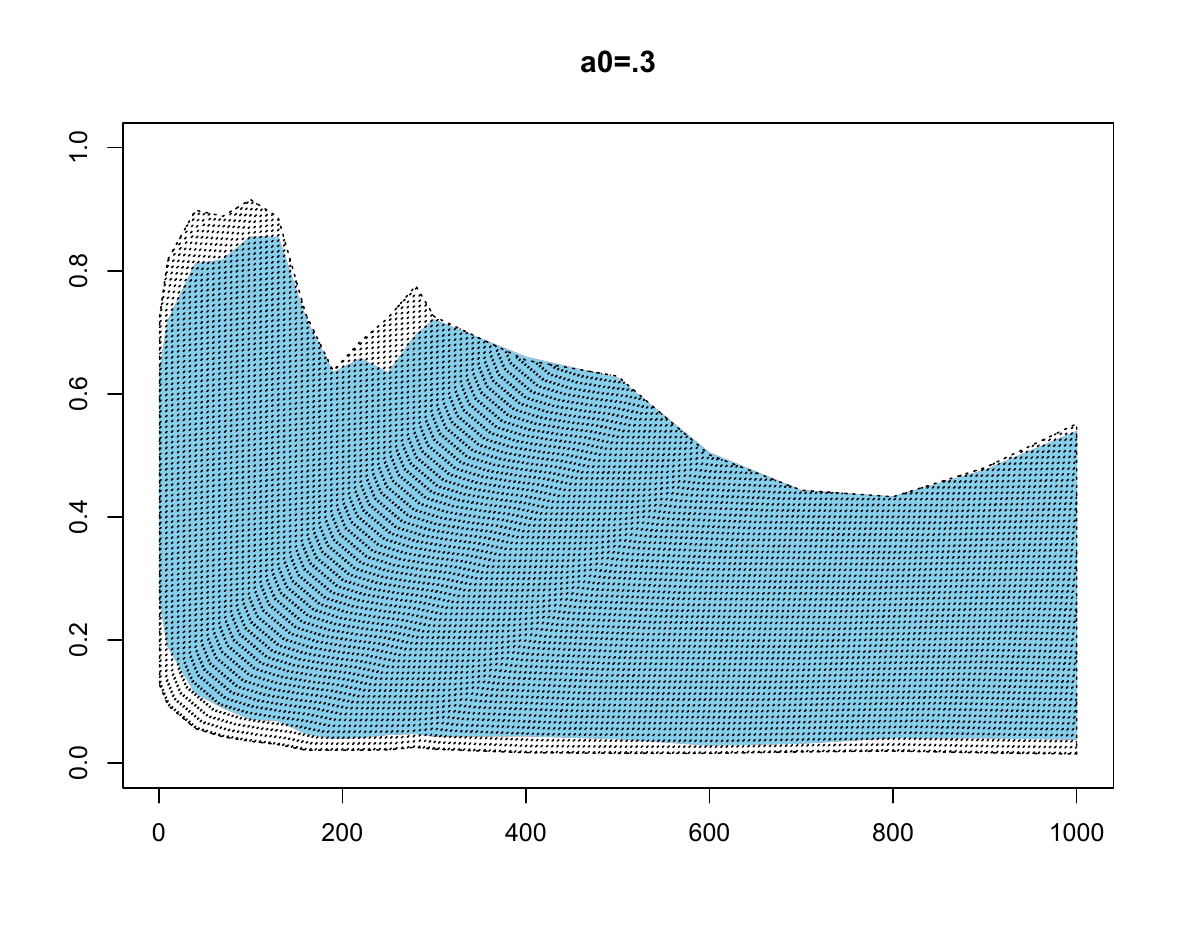}\includegraphics[width=.33\textwidth]{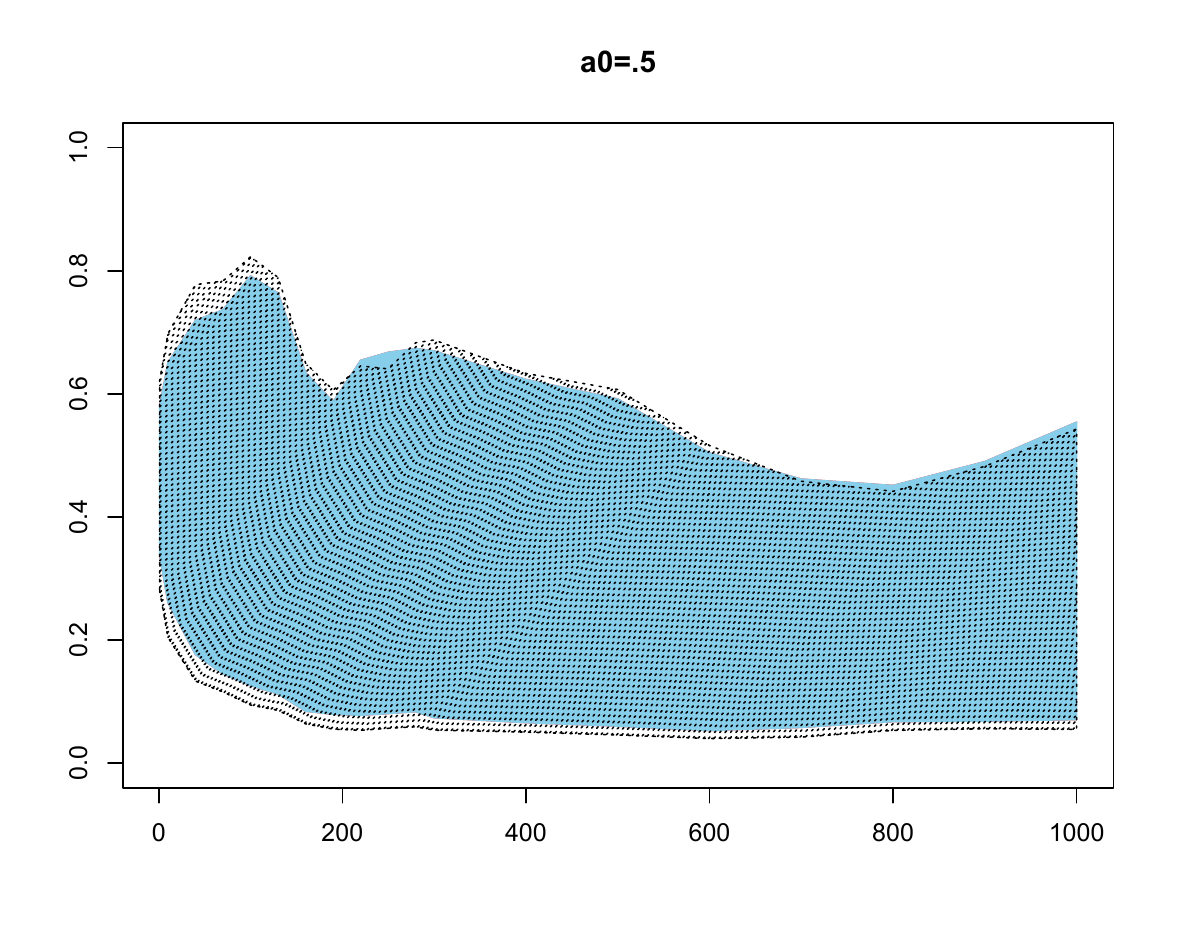}
\caption{\small {\bf Example \ref{ex:NoLap}:} Ranges of posterior means {\em(skyblue)} and medians {\em(dotted)} of
the weight $\alpha$ of model $\mathcal{N}(\theta,1)$ over 100 $\mathcal{N}(0, .7^2)$ datasets for sample sizes from 1 to 1000. 
%The shaded areas indicate the range of the estimations and 
Each estimate is based on a Beta prior with $a_0=.1, .3, .5$, respectively, and $10^4$ MCMC iterations.}
\label{norlap1}
\end{figure}

In this example, the Bayes factor associated with Jeffreys' prior is defined as
$$%\begin{equation}\label{nrlp2}
\mathfrak{B}_{12}=\dfrac{\exp\left\{ \nicefrac{-\sum_{i=1}^n(x_i-\bar{x})^2}{2}\right\}}{(\sqrt{2 \pi})^{n-1}\sqrt{n}}\Big/
\int_{-\infty}^{\infty}\dfrac{\exp\left\{\nicefrac{-\sum_{i=1}^n|x_i-\mu|}{\sqrt{2}}\right\}}{(2\sqrt{2})^n}\dd\mu\,
$$%\end{equation}
where the denominator is available in closed form.
% (see Appendix 1). 
As above, since the prior is improper, it is formally
undefined, even though the classical Bayesian approach argues in favour of
using the same prior on both $\mu$'s. Nonetheless, we employ it in order to
compare Bayes estimators of $\alpha$ with the posterior
probability of the model being a $\mathcal{N}(\mu, 1)$ distribution. Based on a Monte Carlo experiment involving 100
replicas of a $\mathcal{N}(0, .7^2)$ dataset, Figure \ref{norlap2} demonstrates the reluctance of the estimates of $\alpha$
to approach $0$ or $1$, while $\mathbb{P}(\MF_1|\bx)$ varies over the whole
range between $0$ and $1$ for all sample sizes considered here.  While this is
a weakly informative indication, the right hand side of Figure \ref{norlap2}
shows that, on average, the posterior estimates of $\alpha$ converge toward a
value between $.1$ and $.4$ for all $a_0$ while the posterior probabilities
converge to $.6$. In this respect, both criteria offer a similar interpretation
about the data because neither $\alpha$ nor $P(\MF_1|x)$ provide definitive support for either model.

\begin{figure}[!h]
\includegraphics[width=.59\textwidth]{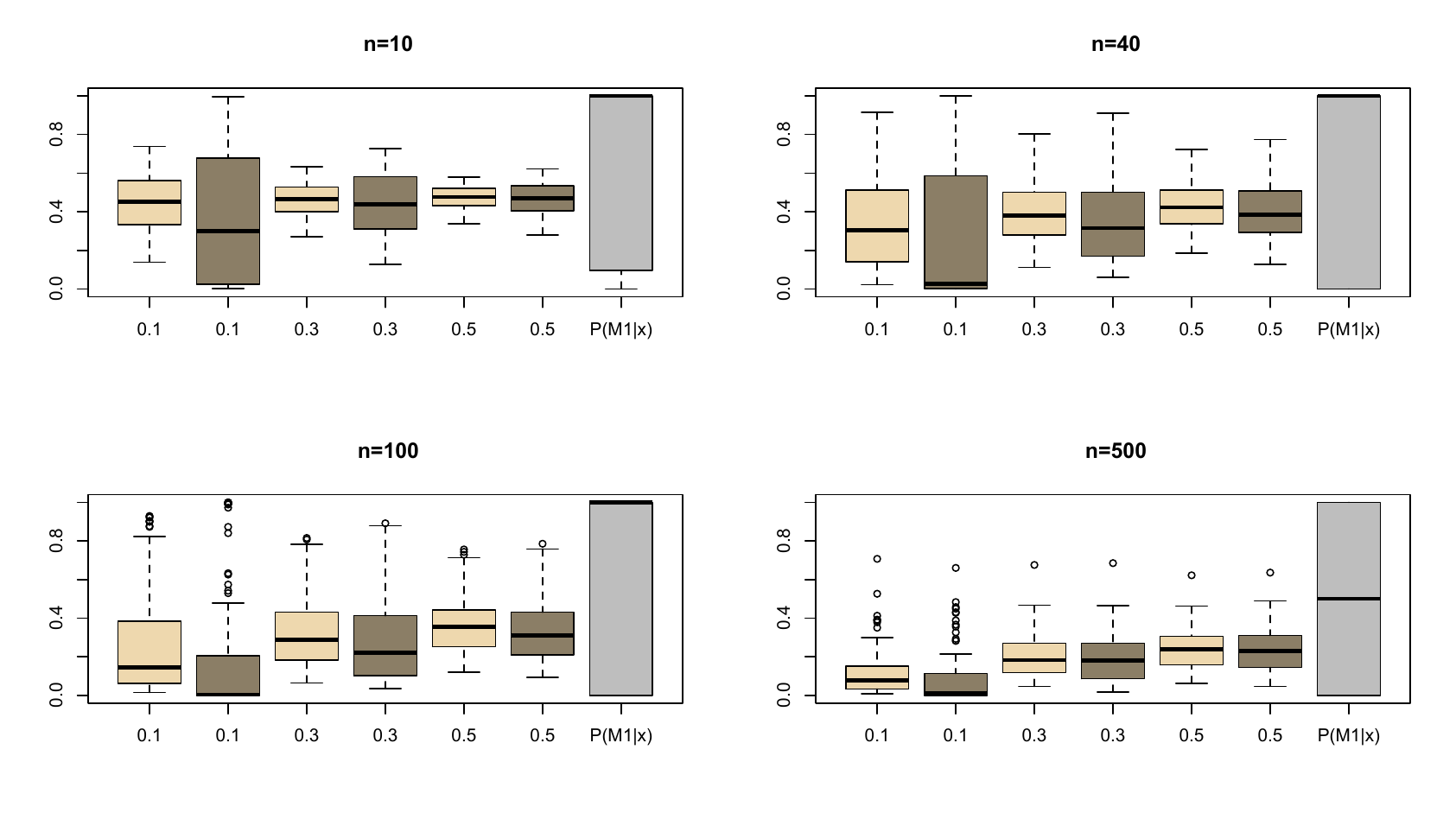}\includegraphics[width=.41\textwidth]{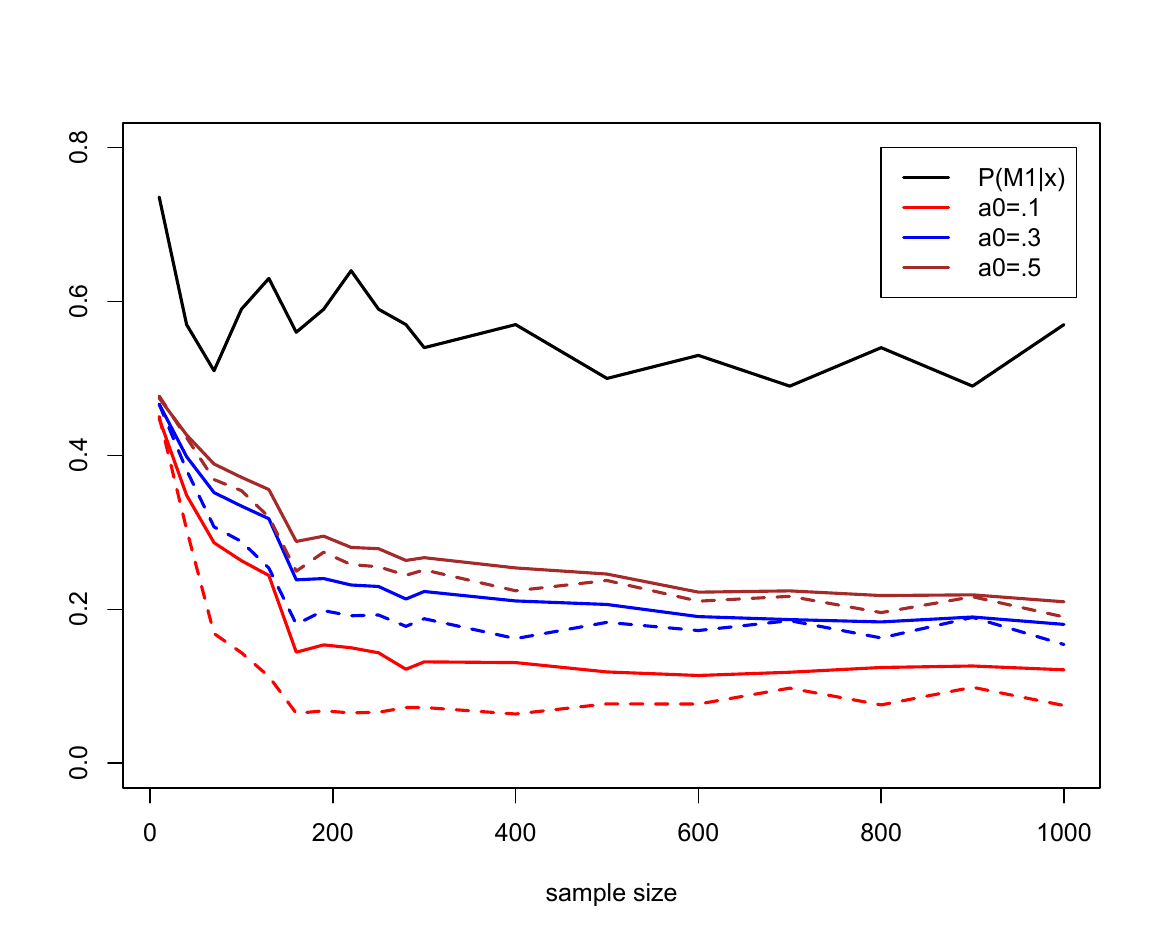}
\caption{\small {\bf Example \ref{ex:NoLap}:} {\em (left)} Boxplot of the posterior means {\em(wheat)} and medians {\em(dark wheat)} of
$\alpha$, and of the posterior probabilities of model $\mathcal{N}(\mu, 1)$ over 100 $\mathcal{N}(0, .7^2)$ datasets for
sample sizes $n=10, 40, 100, 500$; {\em (right)} averages of the posterior means and posterior medians of $\alpha$
against the posterior probabilities $\mathbb{P}(\MF_1|\bx)$ for sample sizes going from 1 to 1000.
Each posterior approximation is based on $10^4$ Metropolis-Hastings iterations.}
\label{norlap2}
\end{figure}

\end{example}

\vspace{0.3cm}
\begin{example}\label{ex:GoF}

In this example we investigate a nonparametric goodness-of-fit problem of testing if the data come from a Gaussian distribution or not. We represent non Gaussian distributions as nonparametric mixtures of Gaussian distributions so that our encompassing model becomes (with an abuse of notations) 
$$ \model_\alpha : \, \alpha  \mathcal N(\mu_1, \sigma_1^2) + (1- \alpha) \int_{\mathbb R} \mathcal N(\mu, \sigma_1^2 ) dP(\mu) $$ 
where we consider a prior distribution on $(\mu_1, \sigma_1^2,  P ) $ defined by 
 $$ \mu_1| \sigma_0^2  \sim \mathcal N(0, \tau^2 \sigma_1^2 ) , \quad \sigma_1^2 \sim IG(b_1,b_2) , \quad P \sim DP ( M, \mathcal N(0, \sigma_1^2 \tau^2)) $$ 
 where $DP ( M, G)$ denotes the Dirichlet process with base measure $M G$ and $IG(b_1,b_2)$ the inverse Gamma distribution with parameter $(b_1,b_2) $. This model defines a standard nonparametric prior distribution on the density $f$ of the observations. 
 
Although the model does not follow the theory developed in Section
\ref{sec:consix} since it is restricted to the parametric case, the general
theory on nonparametric mixture models implies that  the posterior distribution
on $f$ concentrates under $\model_\alpha$ 
around the true density in Hellinger or $L_1$;   see, for instance, \cite{kruijer:rousseau:vdv:10} or \cite{ghosal:vdv:mix}. This implies that if the true distribution with density $f_0$ is not  Gaussian, i.e. 
 $$ \inf_{\mu, \sigma} \| f_0  - \varphi_{\mu, \sigma}\|_1 = \delta >0,$$ where  $ \varphi_{\mu, \sigma}$ is the density of a $ \mathcal N(\mu, \sigma^2)$ random variable, then the posterior probability  
$ \Pi( \alpha > 1 - \delta/2-\epsilon   | \mathbf x^n ) $ for all $\epsilon>0$, goes to 0 almost surely under $f_0$. This is a consequence of 
 $$ \| f_0  -\alpha  \varphi_{\mu_1, \sigma_1} + (1- \alpha) \int_{\mathbb R}\varphi_{\mu, \sigma_1}dP(\mu)  \|_1\geq 
\| f_0  -\alpha  \varphi_{\mu_1, \sigma_1}\|_1 - (1-\alpha) \geq \| f_0  - \varphi_{\mu_1, \sigma_1}\|_1-2 (1-\alpha).$$
The convergence under  $f_0 =  \varphi_{\mu_0, \sigma_0}$ is more intricate but the following heuristic argument gives us some hints on how to choose the hyperparameters: using \cite{scricciolo2011}  we find that the posterior distribution concentrates around $f_0 $ at the rate $\sqrt{\log n}/\sqrt{n}$.  In \cite{long}, it is proved that for nonparametric location mixtures the posterior distribution on the mixing density is Wasserstein  consistent. Here the model is a location mixture of Gaussians, but the common scale is also unknown, and we conjecture the result of  \cite{long} still holds in our case. Hence assuming that the posterior distribution of $Q_\alpha = (\alpha \delta_{(\mu_1)} + ( 1 - \alpha) P)\times\delta_{(\sigma_1)}$ converges in $L_2$-Wasserstein distance to $\delta_{(\mu_0)}\times\delta_{(\sigma_0)}$, we consider  a Taylor expansion and we obtain (see \cite{long} and \cite{rousseau:mengersen:2011}) 
 \begin{equation*}
 \begin{split}
(\log n/n)^{1/2} &\gtrsim   \|  \varphi_{\mu_0, \sigma_0}  -\alpha  \varphi_{\mu_1, \sigma_1} + (1- \alpha) \int_{\mathbb R}\varphi_{\mu, \sigma_1}dP(\mu)  \|_1\\
   & = \left\|   \frac{1}{2} \left(L^{"}_\mu [ E_{Q_\alpha} (\mu- \mu_0)^2  + 2 \sigma_0 (\sigma_1 - \sigma_0)]  + (\sigma_1 - \sigma_0)^2 L_{\sigma,\sigma}^{"} + 2 (\bar \mu - \mu_0)  (\sigma_1 - \sigma_0)L_{\sigma,\mu}^{"} \right) \right. \\
   & \quad \quad +\left. (\bar \mu - \mu_0)\nabla_\mu \varphi + o( u_n ) \right\|_1
 \end{split}
 \end{equation*}
 where  $ \bar \mu = E_{Q_\alpha}(\mu)  $ , $u_n = | \bar \mu - \mu_1| + | E_{Q_\alpha} (\mu- \mu_0)^2+ 2 \sigma_0 (\sigma_1- \sigma_0) | +  (\sigma_1 - \sigma_0)^2$ and  $ L^{"}_\mu$,  $L_{\sigma,\mu}^{"} $ and $L_{\sigma,\sigma}^{"} $ ) are the second derivative of $\varphi_{\mu,\sigma}$ with respect to $\mu$,  $(\mu, \sigma)$ and $\sigma$ respectively. By linear independence, this leads to 
 $|u_n| \lesssim \sqrt{\log n/n} $. In particular, the prior mass of this event if $1 - \alpha < \epsilon $ is bounded by a term of order  
 $ (\log n/n)^{ 1 + M_0(1-e)/4 + (a_0\wedge 1/2)/4}$ for any $1>e >0$ , which is $ o(n^{-1-a_0/2} ) $ as soon as $M_0   + a_0\wedge 1/2> 2a_0 $. Hence, using the same argument as in \cite{rousseau:mengersen:2011} under the Gaussian model the posterior distribution on $\alpha$ will concentrate around 1.
 
This reasoning leads us to consider hyperparameters satisfying $M_0   +
a_0\wedge 1/2> 2a_0 $, for instance: $ a_0 = 1 , M_0 >3/2$ or $a_0 = 1/2$ and
$M_0 =1$.  We implemented an MCMC algorithm using a marginal representation for
the mixture, that is, integrating out the parameters $\mu_1, \sigma_1^2$ and
$P$ and sampling purely $\alpha$ and the allocation random variables in the
data augmentation scheme. The output of this implementation is illustrated in
Figure \ref{Gofalf} for both Normal and non-Normal ($t$ distributed) samples, showing a departure
away from $\alpha=1$ for the later, the slower the decrease the larger the
degree of freedom.
 
\begin{figure}[!h]
\includegraphics[width=.59\textwidth]{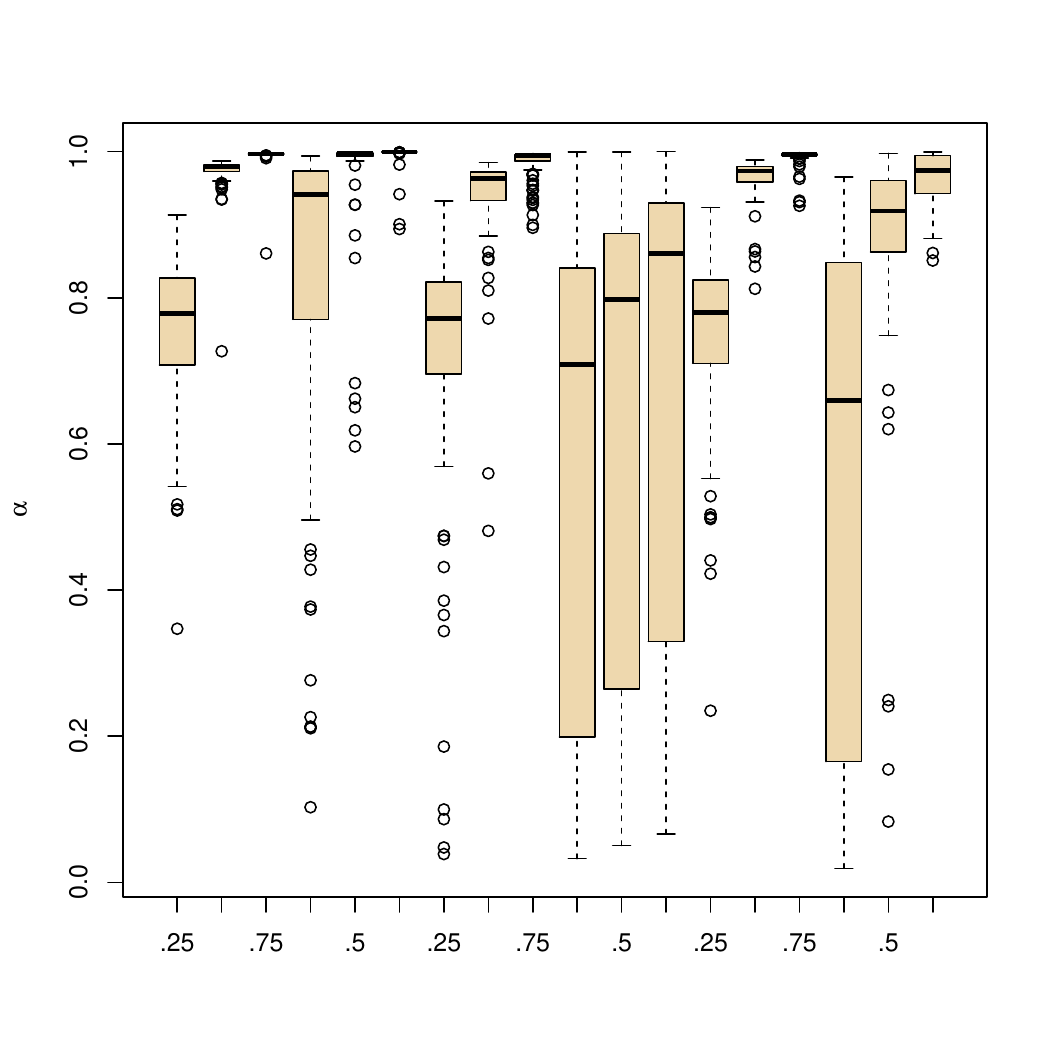}
\caption{\small {\bf Example \ref{ex:GoF}:} Boxplot of the posterior $25\%$, $50\%$, and $75\%$ quantiles of the mixture weight
$\alpha$ of the Normal component for 100 replications of $N$ simulations from (a) a standard normal distribution $(N=100)$;
(b) a standard normal distribution $(N=1000)$; (c) a $t_6$ distribution $(N=100)$; (d) a $t_6$ distribution $(N=100)$; (e) a $t_9$ distribution $(N=100)$; (g) a $t_9$ distribution $(N=100)$. All values are based on $2\,10^4$ Metropolis-Hastings iterations and $100$ replications of the MCMC runs.}
\label{Gofalf}
\end{figure}
\end{example}

\vspace{0.3cm}
\begin{example}\label{ex:Pima}
In this last example we demonstrate that the theory and methodology corresponding to Theorem \ref{cons:para} can be extended to 
the regression case under the assumption that the design is random. 
We consider a binary response setup, using the R dataset about diabetes in Pima Indian
women \citep{cran} as a benchmark \citep[as in][]{marin:robert:2007}. The dataset contains a random sample of 200
women tested for diabetes according to WHO criteria. The response variable $y$ is ``Yes'' or
``No'', for presence or absence of diabetes and the explanatory variable ${\mathbf x}$ is restricted here to the body mass index (bmi) 
$\text{weight in kg}/(\text{height in m})^2$. For this problem, either logistic or probit regression models 
could be suitable, so we compare these fits via our method.
If ${\mathbf y}=(y_1\quad y_2 \ldots y_n)$ is the vector of binary responses and $X=[I_n\quad {\mathbf x}_1]$ is  the 
$n \times 2$ matrix of corresponding explanatory variables, the models in competition can be defined as ($i=1,\ldots,n$)
\begin{align}\label{eq:lp1}
\MF_1: y_i \mid {\mathbf x}^i, \theta_1 &\sim \mathcal{B}(1, p_i) \quad \text{where} \quad p_i=  \frac{\exp({\mathbf x}^i \theta_1)}{1+\exp({\mathbf x}^i \theta_1)} \nonumber \\
\MF_2: y_i \mid {\mathbf x}^i, \theta_2 &\sim \mathcal{B}(1, q_i) \quad \text{where} \quad q_i=  \Phi({\mathbf x}^i \theta_2) 
\end{align}
where ${\mathbf x}^i=(1\ x_{i1})$ is the vector of explanatory variables and where  $\theta_j$, $j=1, 2$, is a
$2 \times 1$ vector made of the intercept and of the regression coefficient under either $\MF_1$ or $\MF_2$. We once
again consider the case where both models share the same parameter. However, for this generalised linear model there 
is no moment equation that relates $\theta_1$ and $\theta_2$, so we adopt a local reparameterisation strategy by rescaling 
the parameters of the probit model $\MF_2$ so that the MLE's of both models coincide. This strategy follows from the remark by
\cite{choudhury:ray:sarkar:2007} regarding the connection between the Normal cdf and a logistic function
$$
  \Phi({\mathbf x}^i \theta_2) \approx \frac{\exp(k {\mathbf x}^i \theta_2)}{1+\exp(k {\mathbf x}^i \theta_2)}
$$
and we attempt to find the best estimate of $k$ to make both parameters coherent. Given
$$
(k_0, k_1)=(\nicefrac{\widehat{\theta_{01}}}{\widehat{\theta_{02}}},
\nicefrac{\widehat{\theta_{11}}}{\widehat{\theta_{12}}})\,,
$$
which denote ratios of the maximum likelihood estimates of the logistic model parameters to those for the probit model,
we redefine $q_i$ in \eqref{eq:lp1} as
\begin{equation}\label{eq:lp2}
q_i=  \Phi({\mathbf x}^i (\kappa^{-1}\theta))\,, 
\end{equation}
$\kappa^{-1}\theta=(\nicefrac{\theta_0}{k_0},\nicefrac{\theta_1}{k_1})$.
 
Once the mixture model is thus parameterised, we set our now standard Beta $\mathcal{B}(a_0, a_0)$ on the weight of
$\MF_1$, $\alpha$, and choose the default $g$-prior on the regression parameter \citep[see, e.g.,][Chapter
4]{marin:robert:2007}, so that
$$
\theta \sim \mathcal{N}_{2}(0, n(X^{\text{T}}X)^{-1}) .
$$
In a Gibbs representation (not implemented here), the full conditional posterior distributions given the allocation
vector $\zeta$ are $\alpha \sim \mathcal{B}(a_0+n_1, a_0+n_2)$ and
\begin{equation}\label{eq:lp3}
\begin{split}
\pi (\theta \mid {\mathbf y}, X, \zeta) &\propto \dfrac{\exp\left\{ \sum_i \mathbb{I}_{\zeta_i=1}y_i{\mathbf x}^i
\theta \right\}}{\prod_{i; \zeta_i=1}[1+\exp({\mathbf x}^i \theta)]} \,\exp\left\{ -\theta^T(X^TX)\theta\big/ 2n \right\} \\
& \times 
\prod_{i; \zeta_i=2} \Phi({\mathbf x}^i (\kappa^{-1}\theta))^{y_i}(1-\Phi({\mathbf x}^i (\kappa^{-1}\theta)))^{(1-y_i)}
\end{split}
\end{equation}
where $n_1$ and $n_2$ are the number of observations allocated to the logistic and probit models, respectively. This
conditional representation shows that the posterior distribution is then clearly defined, which is obvious when
considering that the chosen prior is proper.
% The posterior distribution of $\theta^*$ in \eqref{eq:lp3} is not standard and we use Metropolis-Hastings algorithm for
% obtaining a sequence of random samples from this probability distribution.  
 
For the Pima dataset, the maximum likelihood estimates of the GLMs
are $\hat{\theta_1}=(-4.11, 0.10)$ and $\hat{\theta_2}=(-2.54, 0.065)$, respectively, and so $k=(1.616, 1.617)$.
We compare the outcomes of this Bayesian analysis when $a_0=.1, .2, .3, .4, .5$ in Table \ref{lpt}. As clearly shown in 
the Table, the estimates of $\alpha$ are close to $0.5$ for all values of $a_0$ and the estimates of
$\theta_0$ and $\theta_1$ are very stable (and quite similar to the MLEs). We note a slight increase of $\alpha$ towards
$0.5$ as $a_0$ increases, but do not want to over-interpret the phenomenon. This behaviour leads us to conclude that
(a) none or both of the models are appropriate for the Pima Indian data, and (b) the sample size may be insufficiently large
to allow discrimination between the logit and the probit models. 

\begin{table}
  \caption{\small \label{lpt} Dataset Pima.tr: Posterior medians of the mixture model parameters.}
\begin{tabular}{@{}llllll@{}}
            &                  &\multicolumn{2}{l}{\bf Logistic model parameters}&\multicolumn{2}{l}{\bf Probit model parameters}\\
$a_0$& $\alpha$ & $\theta_{0}$ & $\theta_{1}$ & $\frac{\theta_0}{k_0}$ & $\frac{\theta_1}{k_1}$ \\

.1        & .352 & -4.06 & .103 & -2.51 & .064 \\
.2        & .427 & -4.03 & .103 & -2.49 & .064 \\
.3        & .440 & -4.02 & .102  & -2.49 & .063  \\ 
.4        & .456   & -4.01 & .102 & -2.48  &  .063\\
.5        & .449& -4.05 & .103 & -2.51 & .064 \\
\end{tabular}
\end{table}

To follow up on this last remark, we ran a second experiment with
simulated logit and probit datasets and a larger sample size $n=10,000$. We used the regression
coefficients $(5, 1.5)$ for the logit model and $(3.5, .8)$ for the probit model. The estimates of the
parameters of both $\MF_{\alpha_1}$ and $ \MF_{\alpha_2}$ and for both datasets are presented in Table \ref{lpt1}. For
every $a_0$, the estimates in the true model are quite close to the true values and the posterior estimates of $\alpha$
are close to $1$ in the logit case and to $0$ in the probit case. For this large setting, there is thus
consistency in the selection of the proper model. In addition, Figure \ref{lpf} shows that when the sample size is large
enough, the posterior distribution of $\alpha$ concentrates its mass near $1$ and $0$ when the data are simulated from a
logit and a probit model, respectively.

\begin{table}
  \caption{\small \label{lpt1} Simulated dataset: Posterior medians of the mixture model parameters.}
\begin{tabular}{@{}lllllllllll@{}}
 & & & \bf $\MF_{\alpha}^1$ & & & & & \bf $\MF_{\alpha}^2$ & & \\
True model: & \multicolumn{5}{l}{\bf logistic with $\theta_1=(5, 1.5)$}&\multicolumn{5}{l}{\bf probit with $\theta_2=(3.5, .8)$}\\
$a_0$& $\alpha$ & $\theta_{0}$ & $\theta_{1}$ & $\frac{\theta_0}{k_0}$ & $\frac{\theta_1}{k_1}$ & $\alpha$ & $\theta_{0}$ & $\theta_{1}$ & $\frac{\theta_0}{k_0}$ & $\frac{\theta_1}{k_1}$\\

.1        & .998 & 4.940 & 1.480 & 2.460 & .640 & {}.003& 7.617& 1.777& {}3.547& {}.786 \\
.2        & .972 & 4.935 & 1.490 & 2.459 & .650 & {}.039& 7.606& 1.778& {}3.542& {}.787 \\
.3        & .918 & 4.942 & 1.484 & 2.463 & .646 & {}.088& 7.624& 1.781& {}3.550& {}.788 \\ 
.4        & .872 & 4.945 & 1.485 & 2.464 & .646 & {}.141& 7.616& 1.791& {}3.547& {}.792  \\
.5        & .836 & 4.947 & 1.489 & 2.465 & .648 & {}.186& 7.596& 1.782& {}3.537& {}.788 \\
%& & & & & & & & & & \\
%& & & \multicolumn{6}{l}{\bf Maximum likelihood estimations} & \\
%& & 5.01 & 1.54 & 2.51 & .67 & & 7.51 & 1.93 & 3.44 & .77 \\
\end{tabular}
\end{table}

\begin{figure}
\centering
\begin{subfigure}[b]{0.32\textwidth}
\includegraphics[width=\textwidth]{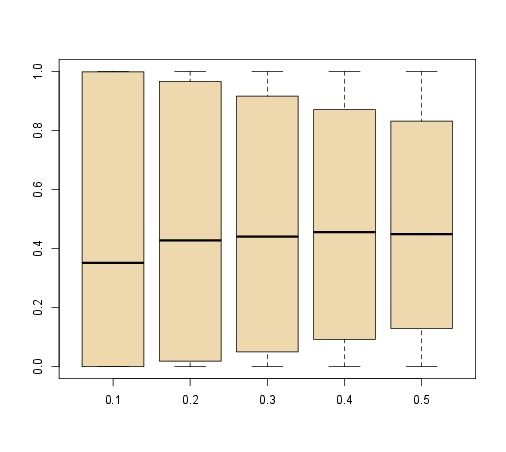}
\caption{Pima dataset}
                \label{fig:gulle}
        \end{subfigure}
 ~ %add desired spacing between images,        
        \begin{subfigure}[b]{0.32\textwidth}
\includegraphics[width=\textwidth]{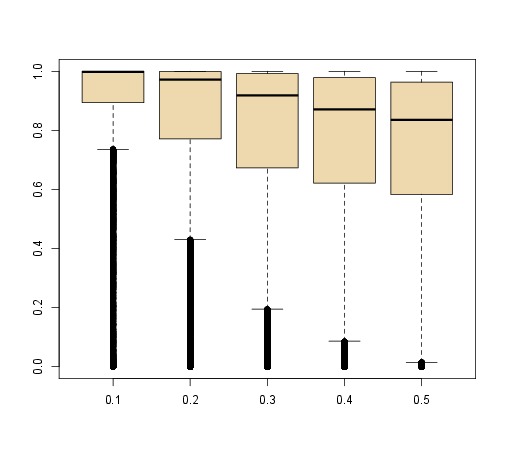}
\caption{Data from logistic model}
                \label{fig:tigera}
        \end{subfigure}
 ~ %add desired spacing between images,        
        \begin{subfigure}[b]{0.32\textwidth}
\includegraphics[width=\textwidth]{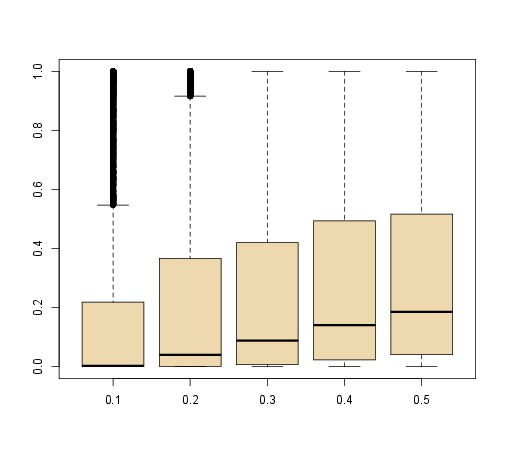}
 \caption{Data from probit model}
                \label{fig:mouseb}
        \end{subfigure}
\caption{\small {\bf Example \ref{ex:Pima}:} Histograms of the posterior distributions of $\alpha$ in favor of the logistic model based on $10^4$ Metropolis-Hastings iterations where $a_0=.1, .2, .3, .4, .5$.}
\label{lpf}
\end{figure}
\end{example}

\vspace{0.3cm}
\section{Conclusion}\label{sec:quatr}

Bayesian inference has been used in a very wide and increasing
range of contexts over the past thirty years, and many of the applications 
of the Bayesian paradigm have concentrated on comparing scientific theories and testing hypotheses. 
Due to the ever increasing complexity of the statistical models handled in
such applications, the natural and understandable tendency of practitioners has been to rely on the default solution of
the posterior probability (or equivalently of the Bayes factor) without fully understanding the sensitivity of these methods
to both prior modeling and posterior calibration \citep{robert:cornuet:marin:pillai:2011}. In this area, 
objective Bayes solutions remain tentative and have not reached consensus.

The novel approach we have proposed here for Bayesian testing of hypotheses and Bayesian model comparison offers in our opinion 
many incentives over these established methods. By casting the problem as an encompassing mixture model, not only do 
we replace the original testing problem with a better controlled estimation target that focuses on the frequency of a given 
model within the mixture model, but we also allow for posterior variability of this frequency. The posterior distribution of 
the weights of both components in the mixture offers a setting for deciding about which model is most favored by the 
data that is at least as intuitive as the sole number corresponding to either the posterior probability or the Bayes factor. 
The range of acceptance, rejection and indecision conclusions can easily be calibrated by simulation under both models, 
as well as by deciding on the values of the weights that are extreme enough in favor of one model. The examples provided 
in this paper have shown that the posterior medians of such weights settle very quickly near the boundary values $1$. 
Although we do not advocate such practice, it is even possible to derive a Bayesian $p$-value by considering the posterior 
area under the tail of the distribution of the weight. Moreover, the approach does not induce additional computational strain
on the analysis.

Besides decision making, another issue of potential concern about this new approach is the impact of the prior
modelling. As demonstrated in our examples, a partly common parameterisation is often feasible and hence allows
for reference priors, at least on the common parameters. This proposal thus allows for a partial removal of the prohibition 
on using improper priors in hypothesis testing \citep{degroot:1973}, a problem which has plagued the
objective Bayes literature for decades. Concerning the prior on the weight parameter, we analyzed the sensitivity on the
resulting posterior distribution of various prior Beta modelings on those weights. While the sensitivity is clearly
present, it naturally vanishes as the sample size increases, in agreement with our consistency results, and remains of a
moderate magnitude. This leads us to suggest the default value of $a_0=0.5$ in the Beta prior, in connection with both the
earlier result of \cite{rousseau:mengersen:2011} and Jeffreys' prior in the simplest mixture setting.

\hyphenation{Post-Script Sprin-ger}

\small
\section*{Appendix 1: Proofs of Section 3}\label{app:proofs} % \ref{sec:consix}
In this Section we give the proofs of Theorems \ref{cons:para}, \ref{th:cons:embed} and \ref{th:separationrate}. 

\subsection*{Proof of Theorem \ref{cons:para}} 

Using Proposition \ref{prop:cons}, we have that 
$$
\pi\left( A_n | \mathbf x^n \right) = 1+ o_p(1)
$$
with $A_n = \{ ( \alpha, \theta); \|f_{\theta, \alpha}-f_{\theta^*, \alpha^*} \|_1 \leq \delta_n\}$ and $\delta_n = M
\sqrt{ \log n/n}$. Consider a subsequence $\alpha_n, P_{1,\theta_{1n}}, P_{2,\theta_{2n}}$ which converges to $\alpha,
\mu_1, \mu_2$ where convergence holds in the sense that $\alpha_n \rightarrow \alpha$ and $P_{j,\theta_{jn}}$ converges
weakly to $\mu_j$.  Note that $\mu_j(\mathcal X)\leq 1$ by precompacity of the unit ball under the  weak topology. At
the limit $$ \alpha \mu_1 + (1-\alpha)\mu_2 = \alpha^* P_{1, \theta_1^*}+ ( 1 - \alpha^*)P_{2, \theta_2^*}$$
The above equality implies that $\mu_1$ and $\mu_2$ are probabilities. Using \eqref{ident}, we obtain that 
$$ 
\alpha = \alpha^*, \quad \mu_j = P_{j, \theta_j^*},
$$
which implies posterior consistency for $\alpha$. The proof of \eqref{rate} follows the same line as in
\cite{rousseau:mengersen:2011}. Consider  first the case where $\alpha^* \in (0,1)$. Then the posterior distribution on
$\theta$ concentrates around $\theta^* $. 

Writing 
$$
L' = (f_{1,\theta_1^*}-f_{2,\theta_2^*},\alpha^* \nabla f_{1,\theta_1^*}, (1- \alpha^*)\nabla f_{2,\theta_2^*}  ) := (L_\alpha, L_1, L_2)$$
$$
L^{"} = \mbox{diag}(0, \alpha^*  D^2 f_{1,\theta_1^*}, (1- \alpha^*)D^2 f_{2,\theta_2^*}  )\quad \mbox{and } \quad \eta
= ( \alpha - \alpha^*, \theta_1-\theta_1^*, \theta_2-\theta_2^*), \quad \omega = \eta /| \eta|\,,
$$
we then have
\begin{equation} \label{taylor}
 \|  f_{ \theta, \alpha} - f_{ \theta^*,\alpha^*}\|_1 = |\eta|	\left| \omega^{\text{T}} L' + \nicefrac{ |\eta| }{ 2 } \omega^{\text{T}} L^{"}
\omega + |\eta| \omega_1\left[ \omega_2^T L_1- \omega_3^T L_2 \right]+ o(|\eta|) \right|
\end{equation}
For all $(\alpha, \theta) \in A_n$, $\eta = (\alpha -\alpha^*, \theta_1-\theta_1^*, \theta_2 - \theta_2^*)$ goes to
0 and for $n$ large enough there exists $\epsilon>0$ such that  $|\alpha-\alpha^*| + |\theta-\theta^*|\leq \epsilon$. We
now prove that there exists $c>0$ such that for all  $(\alpha, \theta) \in A_n$
$$
v(\omega) = \left| \omega^{\text{T}} L' + \frac{ |\eta| }{ 2 } \omega^{\text{T}} L^{"} \omega + |\eta| \omega_1\left[ \omega_2^{\text{T}} L^{'}_2+
\omega_3^{\text{T}} L^{'}_3 \right]+ o(|\eta|) \right|>c,
$$
where $\omega$ is defined with respect to $\alpha, \theta$.
Were it not the case, there would exist a sequence $(\alpha_n,\theta_n) \in A_n $ such that the associated $v(\omega_n)\leq
c_n$ with $c_n=o(1)$. As $\omega_n$ belongs to a compact set we could find a subsequence converging to a point $\bar
\omega$. At the limit we  would obtain 
$$
\bar \omega ^{\text{T}} L^{'}= 0
$$
and by linear independence $\bar \omega=0$ which is not possible.  Thus for all $(\alpha, \theta)\in A_n$ 
$$
|\alpha-\alpha^*| + |\theta-\theta^*| \lesssim  \delta_n. 
$$
Assume now instead that $\alpha^*= 0$. Then define $ L^{'} = (L_\alpha, L_2)$ and
$$
L^{"} = \mbox{diag}(0,  D^2 f_{2,\theta_2^*}  )\quad \mbox{and } \quad \eta = ( \alpha - \alpha^*, \theta_2-\theta_2^*),
\quad \omega = \eta /| \eta|
$$
and consider a Taylor expansion with $\theta_1$ fixed, $\theta_1^* = \theta_1$ and $|\eta|$ going to 0. This leads to 
\begin{equation} \label{taylorbis}
%\begin{split}
\|  f_{ \theta, \alpha} - f_{\alpha^*, \theta^*}\|_1 = 
|\eta|	\left| \omega^{\text{T}} L' + \frac{ |\eta| }{ 2 } \omega^{\text{T}} L^{"} \omega 
- |\eta| \omega_1 \omega_3 L_2 \right| + o(|\eta|)  
%\end{split}
\end{equation}
in place of \eqref{taylor} and using the same argument as in the case $\alpha^*\in (0,1)$, $$
|\alpha-\alpha^*| + |\theta-\theta^*| \lesssim  \delta_n. 
$$

\subsection*{Proof of Theorem \ref{th:cons:embed}}

Recall that $f^* = f_{1, \theta_1^*}$. To prove \eqref{rate:sharp}
we must first find a precise lower bound on 
$$
D_n:=  \int_\alpha \int_\Theta e^{l_n(f_{\theta, \alpha})- l_n(f^*)}\text{d}\pi_{\theta}(\theta)\text{d}\pi_{\alpha}(\alpha)
$$
Consider the approximating set
$$ 
S_n(\epsilon) = \{ (\theta, \alpha), \alpha > 1-1/\sqrt{n}, |\theta_1-\theta_1^*| \leq 1/\sqrt{n}, |\psi - \bar \psi| \leq \epsilon\}, \quad \theta = (\theta_1, \theta_2)
$$
with $|\bar \psi | > 2 \epsilon$ some fixed parameter in $\mathcal S$. Using the same computations as in \cite{rousseau:mengersen:2011}, 
it holds that  for all $\delta >0$ there exists $C_\delta >0$ such that 
\begin{equation}\label{denom}
    P^*\left( D_n < e^{-C_\delta}\pi(S_n(\epsilon))/2 \right) < \delta.
\end{equation}
So that with probability greater than $1-\delta$, $D_n \gtrsim n^{- (a_2 + d_1)/2}$.
Denote $B_n = \{ (\theta, \alpha); \| f_{\theta,\alpha} - f^*\|_1\leq M_n / \sqrt{n}\}$, we know that 
 $$\pi( \| f_{\theta,\alpha} - f^*\|_1\leq M_n\sqrt{\log n} / \sqrt{n} | \mathbf x^n) =o_p(1).$$
Let $M_n \leq j \leq M_n\sqrt{\log n}$ and consider the slice 
$ S_n(j) = \{ j /\sqrt{n} \leq \| f_{\theta,\alpha} - f^*\|_1\leq (j+1)/ \sqrt{n}\}$. 
We now upper bound $\pi(S_n(j))$. To do so we split the parameter space into $ \alpha < 1 -\delta$ for a fixed arbitrarily small $\delta$,  and $\alpha > 1 -\delta$. In the first case, we have that $\psi$ converges to 0 and $\theta_1$ to $\theta_1^*$. A Taylor expansion leads to 
\begin{equation} \label{case2a}
\begin{split}
& \| \alpha f_{1,\theta_1}+ (1-\alpha) f_{2, \theta_1, \psi} - f_{1, \theta_{1}^*}\|_1  \\
 &= \|(\theta_1 - \theta_{1}^* )^T \nabla_\theta f_{1,\theta_{1}^*} + (1 - \alpha )\psi^T \nabla_\psi f_{2,\theta_{1}^*,0}  \|_1+O(\|\theta_1 - \theta_{1}^*\|^2 + (1-\alpha)\|\psi\|^2 )
\end{split}
\end{equation} 
Setting $v_n = \|\theta_1 - \theta_{1}^* \| + \|(1 - \alpha )\psi \|$ and $\eta = (\theta_1 - \theta_{1}^*, (1-\alpha)\psi)/v_n$, \eqref{case2a} implies that
$$\| \alpha f_{1,\theta_1}+ (1-\alpha) f_{2, \theta_1, \psi} - f_{1,\theta_{1}^*}\|_1  \geq v_n\|\eta^T \nabla f_{2,\theta_{1}^*,0}   \|_1 + O(v_n^2)$$
and by  linear independence of $\nabla f_{2,\theta_1, \psi} $ we obtain   that $v_n \leq C j/\sqrt{n}$ on $S_n(j) \cap \{ \alpha \in (0, 1-\delta) \} $ and 
 $$\pi_{n,1} := \pi( S_n(j) \cap \{ \alpha \in (0, 1-\delta) \}) \lesssim j^{d_2} n^{-d_2/2}. $$
Now consider $1-\alpha \leq \delta $. If $1-\alpha \leq M j /\sqrt{n}$ then  $\| \theta_1 -\theta_{1}^*\| \lesssim j/\sqrt{n}$ which has prior probability bounded by 
 $O( (j/\sqrt{n})^{d_1+a_2})$. If $1-\alpha > M j /\sqrt{n}$, then $\psi $ goes to 0 and \eqref{case2a} implies that 
 $$ \|\theta_1 - \theta_{1}^* \| + \|(1 - \alpha )\psi \| \lesssim j/\sqrt{n} $$
which in turns implies that 
$$ \pi(S_n(j)) \lesssim  j^{d_2} n^{-d_2/2} + (j/\sqrt{n})^{d_1+a_2}+ j^{d_2} n^{-d_2/2}\int_{Mj/\sqrt{n}}^\delta u^{a_2-d_\psi-1}du \lesssim  j^{d_2} n^{-d_2/2} + (j/\sqrt{n})^{d_1+a_2}$$
Since $a_2 \leq d_\psi$, 
\begin{equation}\label{piSjDn}
\frac{\pi(S_n(j))}{\pi(S_n(\epsilon))} \lesssim j^{d_\psi + a_2}.
\end{equation}
Write $S_\epsilon = \{ \epsilon \leq \|f_{ \theta,\alpha} - f^*\|_1 \leq 2\epsilon\}$
Equation \eqref{case2a} implies that if $M_n \sqrt{\log n}/\sqrt{n} \geq \epsilon \geq M_n /\sqrt{n}$, $S_\epsilon \subset \{ \|\theta_1 - \theta_{1}^* \| \leq \tau_1\epsilon \} \cap  \{ \|(1 - \alpha )\psi \| \leq \tau_1 \epsilon\}$ for some $\tau_1 >0$. To cover  $S_\epsilon $ by $L_1$ balls of radius $\epsilon/2$ we consider $\|\theta_1 - \theta_1' \| \leq  \zeta \epsilon$ and we split the set into $1-\alpha \leq \epsilon$ and $ 1-\alpha > \epsilon$. 
Note that by choosing $\zeta$ small enough, 
\begin{equation*}
\begin{split}
 \|f_{\theta, \alpha} - f_{ \theta',\alpha'}\|_1 &\leq  \|f_{\theta_1,\psi, \alpha} - f_{\theta_1,\psi', \alpha'}\|_1 + \epsilon/16\\
 & = \|(1-\alpha) (f_{2,\theta_1,\psi} - f_{2,\theta_1,0}) - (1-\alpha') (f_{2,\theta_1,\psi'} - f_{2,\theta_1,0}) \|_1 + \epsilon/16 
\end{split} 
\end{equation*}
If $1-\alpha > \kappa \epsilon$ for some $\kappa >0$. Then on $S_\epsilon $ , $\| \psi\| \leq \tau_1 /\kappa$ and by choosing $\kappa$ large enough, 
\begin{equation*}
\begin{split}
 \|f_{  \theta,\alpha} - f_{ \theta', \alpha'}\|_1 &\leq   M_1 \| (1-\alpha) \psi  - (1-\alpha') \psi' \| + \epsilon/8 \leq \epsilon/4
\end{split} 
\end{equation*}
 by choosing $\| (1-\alpha) \psi  - (1-\alpha') \psi' \| \leq \zeta \epsilon$ with $\zeta $ small enough. 
If $1-\alpha \leq \kappa \epsilon$, choose $|\alpha' - \alpha| \leq  \epsilon/16$ so that 
\begin{equation*}
\begin{split}
 \|f_{  \theta,\alpha} - f_{ \theta', \alpha'}\|_1  
 & \leq (1-\alpha)\|f_{2,\theta_1,\psi} - f_{2,\theta_1,\psi'} \|_1 + \epsilon/4 \leq  (1-\alpha)\|\psi - \psi' \| M_1  + \epsilon/8 \leq \epsilon/2 
 \end{split} 
\end{equation*}
by choosing $\|\psi - \psi' \| \leq 1/(4M_1\kappa)$. Hence the local $L_1$ entropy is bounded by a constant for all $M_n \sqrt{\log n}/\sqrt{n} >\epsilon >M_n /\sqrt{n}$ and using Theorem 2.4 of \citep{ghosal:ghosh:vdv:00}, we obtain \eqref{rate:sharp}. 

Now consider $d_\psi > a_2$ and let $A_n = \{ (\theta, \alpha) \in B_n; 1-\alpha > z_n/\sqrt{n} \}$ with $z_n$  a sequence
increasing to infinity faster than $M_n$ and $B_n = \{   \|f_{  \theta,\alpha} - f^*\|_1 \leq M_n /\sqrt{n} \} $. We prove that $\pi( A_n| \mathbf{x}^n) = o_p(1) $ by proving that $\pi( A_n) =o( n^{- (a_2 + d_1)/2})$ and using the lower bound on $D_n$ of order $n^{- (a_2 + d_1)/2}$. We split $B_n$ into 
\begin{equation*}
B_{n,1}(\delta)  = B_n \cap \{ (\theta, \alpha), \theta = (\theta_1, \psi); \|\psi\| < \delta \}, \quad
B_{n,2}(\delta) = B_n \cap B_{n,1}(\delta)^c , \quad \delta >0
\end{equation*}
To simplify notation we also write $\delta_n = M_n/\sqrt{n}$. First we prove that for all $\delta>0$,
$A_n \cap   B_{n,2}(\delta)  = \emptyset$, when $n$ is large enough. Let $\delta>0$, then for any $(\theta, \alpha)
\in A_n \cap B_{n,2}(\delta)$, We thus  have $\|\psi\| \neq o(1)$, $\alpha = 1 + o(1) $ and $|\theta_1 - \theta_1^*|= o(1)$.
Consider a Taylor expansion of $f_{\theta,\alpha}$ around $\alpha = 1$ and $\theta_1 = \theta_1^*$ , with $\psi$ fixed.
This leads to 
\begin{equation*}
\begin{split}
f_{\theta,\alpha}- f^* &= (\alpha-1) [ f_{1,\theta_1^*} - f_{2, \theta_1^*, \psi}]
 + ( \theta_1- \theta_1^*)
[\nabla_{\theta_1} f_{1,\theta_1^*}  -  \nabla_{\theta_1} f_{2, \theta_1^*,\psi }(x)] \\
&  \quad + 
\frac{1}{2}( \theta_1- \theta_1^*)^{\text{T}} \left( \bar \alpha  D^2_{\theta_1} f_{1,\bar \theta_1} + (1- \bar \alpha)D^2_{\theta_1} f_{2, \bar
\theta_1, \psi}\right)( \theta_1- \theta_1^*) \\
& \quad + (\alpha-1)( \theta_1- \theta_1^*)^{\text{T}} [ \nabla_{\theta_1}f_{1,\bar\theta_1} - \nabla_{\theta_1}f_{2,
\bar\theta_1, \psi}] \\
&=  (\alpha-1) [ f_{1,\theta_1^*} - f_{2, \theta_1^*, \psi}] + ( \theta_1- \theta_1^*) [\nabla_{\theta_1}
f_{1,\theta_1^*}   -  \nabla_{\theta_1} f_{2, \theta_1^*,\psi }(x)] + o(|\alpha-1|+\|\theta_1 - \theta_1^*\|) \\  
\end{split}
\end{equation*}
with $\bar \alpha \in (0,1)$ and $\bar \theta_1 \in (\theta_1, \theta_1^*)$ and the $o(1)$ is uniform over $A_n \cap B_{n,2}(\delta)$.
Set $\eta = ( \alpha-1, \theta_1 - \theta_1^*)$ and $x = \eta / |\eta|$ if $|\eta |>0$. Then 
\begin{equation*}
\| f_{\theta,\alpha} - f^* \|_1 = |\eta|\left( x^{\text{T}} L_1(\psi) + o(1)\right), \quad L_1 = ( f_{1,\theta_1^*} -
f_{2, \theta_1^*, \psi}, \nabla_{\theta_1} f_{1,\theta_1^*}  -  \nabla_{\theta_1} f_{2, \theta_1^*,\psi }(x))
\end{equation*}
We now prove that on $A_n \cap B_{n,2}(\delta)$,   $  \| f_{\theta,\alpha} - f^* \|_1 \gtrsim |\eta|$. Assume that it
is not the case; then there exist $c_n>0$ going to 0 and  a sequence $ (\theta_{1,n}, \alpha_n)$ such that along that
subsequence $| x_n^{\text{T}} L_1(\psi_n) +o(1) | \leq c_n$ with $x_n = \eta_n/|\eta_n|$. Since it belongs to a compact set,
together with $\psi_n$, any converging subsequence satisfies at the limit $(\bar x, \bar \psi)$,
$$
\bar x^{\text{T}} L_1(\bar \psi) = 0\,, 
$$
which is not possible. Hence $|\alpha-1| \lesssim M_n/\sqrt{n} = o(w_n/\sqrt{n})$, which is
not possible so that $    A_n \cap   B_{n,2}(\delta)  = \emptyset$ when $n$ is large enough. We now bound 
$\pi(A_n \cap   B_{n,1}(\delta))$ for $\delta>0$ small enough but fixed. We consider a Taylor expansion around
$\theta^* = (\theta_1^*, 0)$, leaving $\alpha $ fixed. Note that $ \nabla_{\theta_1} f_{2, \theta^*} =
\nabla_{\theta_1} f_{1, \theta_1^*}$. We have 
\begin{equation*}
f_{\theta,\alpha} - f^* = (\theta_1 - \theta_1^*)^{\text{T}} \nabla_{\theta_1} f_{1, \theta_1^*} +(1-\alpha)
\psi^{\text{T}} \nabla_\psi f_{2, \theta^*} + \frac{ 1}{ 2} (\theta- \theta^*)^{\text{T}} H_{\alpha, \bar \theta} (\theta-
\theta^*)
\end{equation*}
where $H_{\alpha, \bar \theta}$ is the block matrix 
$$
H_{\alpha, \bar \theta} = \left( \begin{array}{cc} \alpha D^2_{\theta_1} f_{1, \bar \theta_1} + (1 - \alpha) D^2_{\theta_1,\theta_1}
f_{2, \bar \theta}  & (1-\alpha) D^2_{\theta_1, \psi} f_{2, \bar \theta}\\
(1-\alpha) D^2_{ \psi,\theta_1} f_{2, \bar \theta} &  (1-\alpha) D^2_{\psi, \psi} f_{2, \bar \theta} \end{array} \right)
$$
Since $H_{\alpha, \bar \theta}$ is bounded in $L_1$ (in the sense that each of its components is bounded as functions in $L_1$),
uniformly in neighbourhoods of $\theta^*$, we have writing $\eta = (\theta_1 - \theta_1^*, (1 -\alpha) \psi)$ and $x =
\eta /|\eta|$, that $|\eta|=o(1)$ on $A_n \cap B_{n,1}(\delta)$ and
\begin{equation*}
\|  f_{\theta, \alpha} - f^*\|_1 \gtrsim |\eta| \left( x^{\text{T}} \nabla f_{2, \theta^*}+ o(1) \right), 
\end{equation*}
if $\epsilon$ is small enough. Using a similar argument to before, this leads to $|\eta| \lesssim \delta_n$ on $A_n \cap
B_{n,1}(\delta)$, so that 
\begin{equation*}
\begin{split}
\pi\left( A_n \cap B_{n,1}(\delta)\right) &\lesssim \delta_n^{d_1} \int_{z_n/\sqrt{n}}^1 (\delta_n/u)^{d_\psi}u^{a_2-1}du
\lesssim \delta_n^{d_1+d_\psi}z_n^{a_2-d_\psi} n^{(d_\psi-a_2)/2}\lesssim n^{-(d_1+a_2)/2}M_n^{d_2}z_n^{a_2-d_\psi}\\
&  = o( n^{-(d_1+a_2)/2})
\end{split}
\end{equation*}
choosing $M_n = o(z_n^{(d_\psi-a_2)/d_2} )$, going to infinity (recall that $d_\psi > a_2$). 

Finally assume that $d_\psi <  a_2$ and denote $C_n = \{ (\theta, \alpha) \in B_n; 1-\alpha < e_n \}$, then the sam arguments imply that  if $\delta >0$ is small enough, $ C_n = C_n \cap B_{n,1}(\delta)$ and 
\begin{equation*}
\begin{split}
\pi\left( C_n \cap B_{n,1}(\delta)\right) &\lesssim \delta_n^{d_1} \int_{0}^{e_n} (\delta_n/u)^{d_\psi}u^{a_2-1}du
\lesssim \delta_n^{d_1+d_\psi}e_n^{a_2-d_\psi}  \lesssim n^{-d_2/2}M_n^{d_\psi}e_n^{a_2-d_\psi} = o(n^{-d_2/2})
\end{split}
\end{equation*}
if $M_n^{d_\psi}= o(e_n^{-(a_2-d_\psi)} )$. Now, working with 
$ S_n' = \{ \|\theta_1 - \theta_1^*\| \leq 1/\sqrt{n} , \, \|\psi\| \leq 1/\sqrt{n}, \alpha \in (\bar \alpha - e_n' , \bar \alpha + e_n')$ with $e_n' $ going to 0 arbitrarily slowly, we have that with probability going to 1, 
$D_n \gtrsim \pi(S_n') \gtrsim n^{-d_2/2} e_n' $ so that by choosing $e_n'$ accordingly,
$ \pi(C_n)   = o(n^{-d_2/2} e_n') $ and Theorem \ref{th:cons:embed} is proved .

%%%%%%%%%%%%%%%%%%%%%%%%%%%%%%%%%%%%
\subsection*{Proof of Theorem \ref{th:separationrate}}

The proof of Theorem \ref{th:separationrate} proceeds along the same line as the previous proof. Let $f_n^*  = f_{2, \theta_{1,n}, \psi_n}$ with $\|\psi_n\| = o(1)$; the other case has already been proved in Theorem \ref{cons:para}. 
Recall that $$\pi\left( \| f_{\theta, \alpha}- f_n^*\|_1\geq M_0 \sqrt{\log n}/\sqrt{n} |\mathbf x^n \right) = o_p(1)$$
if $M_0>0$ is large enough. We restrict ourselves to the case where $M_n /\sqrt{n} \leq \|\psi_n\| = o(n^{-1/4})$ with  since  the other case
can be treated more easily.  We prove first that the posterior concentration rate can be sharpened into 
\begin{equation} \label{sharp:rate}
\pi\left( \| f_{\theta, \alpha}- f_n^*\|_1\geq z_n/\sqrt{n} |\mathbf x^n \right) = o_p(1), \quad \mbox{for any } \quad z_n \rightarrow + \infty .
\end{equation} 
%We  define 
%\begin{equation*}
%u_n =  M_0 \sqrt{\log n} \quad  \mbox{ if } \quad a_2 < d_\psi , \quad \& \quad u_n  =  \frac{ z_n }{ \sqrt{n}} \quad  \mbox{ if } \quad a_2 >  d_\psi.
 %\end{equation*}
To do so we first obtain a sharp lower bound on $D_n$. From the regularity assumptions [B1] and [B2] 
for all $(\alpha, \theta_1, \psi) \in \tilde S_n = \{ \|\theta_1-\theta_{1,n}\|\leq 1/\sqrt{n}, \|(1 - \alpha)\psi- \psi_n\| + \|\psi_n\| \|\psi\| \leq 1/\sqrt{n}\}$ and
a Taylor expansion with $\theta = (\theta_1, \psi)$ around $\theta_{1,n}$ and 0 (both for $\psi$ and $\psi_n$) we bound, 
\begin{equation*}
\begin{split}
KL( f_n^* , f_{\theta, \alpha} ) &\leq \int \frac{ (f_{\theta_{1,n}, \psi_n} - f_{\theta, \alpha} )^2}{ f_{\theta_{1,n},\psi_n}}(x)dx \\
&\lesssim \|\theta - \theta_{1,n}\|_2^2 + \|(1-\alpha)\psi - \psi_n\|^2 + (1 -  \alpha )^2 \|\psi\|^2\|\psi - \psi_n\|^2 \\
&\lesssim \|\theta - \theta_{1,n}\|_2^2 + \|(1-\alpha)\psi - \psi_n\|^2 + \|\psi_n\| \|\psi\|
\end{split}
\end{equation*}
with a similar inequality for $\int f_n^* (\log f_n^* - \log f_{\theta, \alpha})^2(x) dx $. Hence, with probability bounded by $\epsilon$,  
$$D_n \geq e^{- C_\epsilon} \pi(\tilde S_n) ,$$
for some large positive constant $C_\epsilon$. 
We have the following lower bound on $\pi(\tilde S_n)$. Note that $\|(1 - \alpha)\psi- \psi_n\| + \|\psi_n\| \|\psi\| \leq 1/\sqrt{n}$ implies that $1-\alpha \geq 2 \sqrt{n} \|\psi_n\|^2 $. 
\begin{equation}\label{lb:tildeS}
\begin{split}
\pi\left( \tilde S_n \right) & \gtrsim n^{-d_1/2} n^{-d_\psi/2} \int_{ 2 \sqrt{n} \|\psi_n\|^2}^\delta u^{a_2 - d_\psi-1} du \gtrsim n^{-d_2/2}(\sqrt{n}\|\psi_n\|^2)^{-(d_\psi-a_2)_+} .
\end{split}
\end{equation}

We now bound from above 
 $\pi(S_n(j))$ and  control the entropy of $S_n(j)$ for neighbourhoods $S_{n}(j) = \{ j/\sqrt{n} \leq \| f_{\theta_1, \psi, \alpha} - f_n^* \|_1 \leq  (j+1) /\sqrt{n} \}$.

We have on  $S_n(j)$: 
$$  \| \alpha f_{1,\theta_1}+ (1-\alpha) f_{2, \theta_1, \psi} - f_{2, \theta_{1,n}, \psi_n}\|_1 \leq ( j+1) /\sqrt{n}$$
 We split this set into two subsets $\alpha < 1 -\delta$ for a fixed arbitrarily small $\delta$  and $\alpha \geq 1 -\delta$. In the first case, we have as a first aproximation 
$$\| \alpha f_{1,\theta_1}+ (1-\alpha) f_{2, \theta_1, \psi} - f_{1, \theta_{1,n}}\|_1 \lesssim \|\psi_n\| + j/\sqrt{n}$$ % \lesssim (M_n \vee j) /\sqrt{n}
which implies in turn that  $\|\theta_1 - \theta_{1,n}\| \lesssim \|\psi_n\| + j/\sqrt{n}$ and $\|\psi\| \lesssim \|\psi_n\|+ j/\sqrt{n}$ where the constants depend on $\delta$.  A more refined Taylor expansion then leads to 
\begin{equation} \label{case2abis}
\begin{split}
& \| \alpha f_{1,\theta_1}+ (1-\alpha) f_{2, \theta_1, \psi} - f_{2, \theta_{1,n}, \psi_n}\|_1  \\
 &= \|(\theta_1 - \theta_{n} )^T (\nabla_\theta f_{1,\theta_{1,n}} + o(1)) + ((1 - \alpha )\psi - \psi_n)^T \nabla_\psi f_{2,\theta_{1,n},0}  + (1-\alpha) (\psi- \psi_n)^T (D_\psi^2 f_{2,\theta_{1,n},0} +o(1))(\psi- \psi_n) \|_1
\end{split}
\end{equation} 
Setting $v_n = \|\theta_1 - \theta_{n} \| + \|(1 - \alpha )\psi - \psi_n\|$ and $\eta = (\theta_1 - \theta_{1,n}, (1-\alpha)\psi-\psi_n)/v_n$, \eqref{case2abis} implies that 
$$\| \alpha f_{1,\theta_1}+ (1-\alpha) f_{2, \theta_1, \psi} - f_{2, \theta_{1,n}, \psi_n}\|_1  \geq v_n\|\eta^T( \nabla f_{2,\theta_{1,n},0} + o(1))  \|_1 + O(\|\psi_n\|^2+v_n^2)$$
and by linear independence of $\nabla f_{2,\theta_1, 0} $ (assumption B4), $v_n \leq C j/\sqrt{n}$ on $S_n(j)$ and 
 $$\pi(S_n(j) \cap \{\alpha \leq 1-\delta\}) \lesssim j^{d_2} n^{-d_2/2}.$$ %\pi_{n,1}  = 
 
We now consider the last case:  $\alpha > 1- \delta $.  As before, a first crude approximation leads to 
$\| f_{1,\theta_1} - f_{2, \theta_{1,n}, \psi_n}\|_1\lesssim 1 -\alpha + j/\sqrt{n}$, which in turns implies that 
$\|\theta_1 - \theta_{1,n}\| + \|\psi_n\| \lesssim 1 - \alpha + j/ \sqrt{n}$. 
In particular if $j/\sqrt{n} = o(\|\psi_n\|)$, then $1-\alpha \gtrsim \|\psi_n\|$. 

Consider first the case where $j/\sqrt{n} \gtrsim \|\psi_n\|$. Then 
\begin{equation}\label{upboundSj1}
\pi(S_n(j) \cap \{1-\alpha \leq \kappa j/\sqrt{n}\}) \lesssim (j/\sqrt{n})^{d_1+a_2}.
\end{equation}
Moreover 
$$\| \alpha f_{1,\theta_1}+ (1-\alpha) f_{2, \theta_1, \psi} - f_{2, \theta_{1,n}, \psi_n}\|_1  \leq  \| \alpha f_{1,\theta}+ (1-\alpha) f_{2, \theta_1, \psi} - f_{1, \theta_{1,n}}\|_1 + O( \|\psi_n\|) $$
so that when $a_2 \leq d_\psi$, \eqref{piSjDn} holds. When $a_2 > d_\psi$, the proof of Theorem \ref{th:cons:embed} implies that there exists $C>0$ such that
$$ S_n(j) \cap \{1-  \alpha > j/\sqrt{n}\} \subset \{ \|\theta_1 - \theta_{1,n}\| + \|(1-\alpha) \psi -\psi_n\| \leq C j/\sqrt{n}\} $$ and
\begin{equation}\label{upboundSj2}
\pi(S_n(j) ) \lesssim (j/\sqrt{n})^{d_2}.
\end{equation}
Now consider the case where $j/\sqrt{n} =o( \|\psi_n\|) $ and recall that $1-\alpha \gtrsim \|\psi_n\|$ and $\|\theta_1 - \theta_{1,n}\| = o(1)$. A Taylor expansion with $\theta_1$ around $\theta_{1,n}$ and $1- \alpha$ around 0  holding $\psi$ fixed  and non null leads to 
\begin{equation*} 
\begin{split}
\| \alpha f_{1,\theta}+ (1-\alpha) f_{2, \theta_1, \psi} - f_{2, \theta_{1,n}, \psi_n}\|_1 &= \|- \psi_n^T\nabla_\psi f_{2, \theta_{1,n}, 0} + (1-\alpha ) (f_{2, \theta_{1,n}, \psi}-f_{2, \theta_{1,n}, \psi_n}) \\
 & \quad +    (\theta_1 - \theta_{1,n})^T\nabla_\theta f_{1, \theta_{1,n}} \|_1 + o(\|\theta_1 - \theta_{1,n}\|+1/n) + O((1-\alpha)\|\psi_n\|)
\end{split}
\end{equation*} 
Set $\eta = (\theta_1 - \theta_{1,n}, 1-\alpha, \psi_n)$ and $\omega = \eta/\|\eta\|$, then 
$$ \|\eta\| \left(  (1-\alpha ) (f_{2, \theta_{1,n}, \psi}-f_{2, \theta_{1,n},0})  +    (\theta_1 - \theta_{1,n})^T\nabla_\theta f_{1, \theta_{1,n}}\right) \|_1 \lesssim j/\sqrt{n}$$ 
so that by linear independence of $(f_{2, \theta_{1,n}, \psi}-f_{2, \theta_{1,n},0}), \nabla_\theta f_{1, \theta_{1,n}}$ (assumption B3), $\|\psi_n\| \lesssim j/\sqrt{n}$ which is impossible. Therefore $\|\psi\| = o(1)$. 
A Taylor expansion then implies that
  \begin{equation}\label{L1ineq}
 \begin{split}
  \|   (\theta_1 - \theta_{1,n})^T \nabla_\theta f_{1,\theta_{1,n}}  &  + ((1-\alpha)\psi- \psi_n)^T \nabla_\psi f_{2, \theta_{1,n}, 0} + (1-\alpha) (\psi-\psi_n)^T D_\psi^2 f_{2,\theta_{1,n}, 0} (\psi-\psi_n) /2\|_1 \\
  & \quad + o(\|\theta_1 - \theta_{1,n}\| +  (1-\alpha) \|\psi\|^2)
    \lesssim  j/\sqrt{n}
   \end{split}
  \end{equation}
When $\| (1-\alpha)\psi- \psi_n\| \leq \delta (1-\alpha) \|\psi-\psi_n\|^2 $, with $\delta$ small,  then 
$$  \|   (\theta_1 - \theta_{1,n})^T \nabla_\theta f_{1,\theta_{1,n}}  + (1-\alpha) (\psi-\psi_n)^T D_\psi^2 f_{2,\theta_{1,n}, 0} (\psi-\psi_n) /2 \|_1\lesssim j/\sqrt{n},$$ 
set $ \eta = (\theta_1 - \theta_{1,n}, \sqrt{1-\alpha} (\psi-\psi_n))$ and $\omega = \eta/\|\eta\|$, assumption B4 implies that 
$$ \|\theta_1 - \theta_{1,n}\| + (1-\alpha) \|\psi-\psi_n\|^2 \lesssim j/\sqrt{n}, \quad  \| (1-\alpha)\psi- \psi_n\| = o(j/\sqrt{n})$$
and since $1-\alpha\leq \delta$ small, $\|\psi-\psi_n\| = \|\psi\| (1 +o(1))$ so that 
$$\|\psi \|\|\psi_n\|\lesssim j/\sqrt{n}, \quad \text{and} \quad \frac{(1-\alpha) j }{ \sqrt{n} \|\psi_n\| } \gtrsim \|(1-\alpha)\psi\| = \|\psi_n\| +o(j/\sqrt{n}),\quad 1-\alpha \gtrsim  \frac{ \sqrt{n} \|\psi_n\|^2  }{j} $$
The prior mass of this set is bounded above by 
$$\pi_{n,2} \lesssim (j/\sqrt{n})^{d_2} \left(\frac{ \sqrt{n} \|\psi_n\|^2  }{j}\right)^{-(d_\psi-a_2)_+}.$$
Similarly when $ (1-\alpha) \|\psi-\psi_n\|^2 \leq \delta\| (1-\alpha)\psi- \psi_n\|   $, \eqref{L1ineq} becomes 
$$  \|   (\theta_1 - \theta_{1,n})^T \nabla_\theta f_{1,\theta_{1,n}}  + ((1-\alpha)\psi- \psi_n)^T \nabla_\psi f_{2, \theta_{1,n}, 0} \|_1\lesssim j/\sqrt{n}, $$ which in turns  implies that 
 $ \| \theta_1 - \theta_{1,n}\| + \| (1-\alpha)\psi- \psi_n\| \lesssim  j/\sqrt{n} $ and $(1-\alpha) \|\psi-\psi_n\|^2= o(j/\sqrt{n} )$ so that $$\frac{\alpha }{ 1-\alpha} \|  \psi_n\| \lesssim \sqrt{ \frac{ j}{\sqrt{n} (1-\alpha)} } $$
 and $1-\alpha \gtrsim (\sqrt{n} \|\psi_n\|^2)/j$ and the prior mass of this set is bounded from above by 
 $$ \pi_{n,3} \lesssim (j/\sqrt{n})^{d_2}  (\sqrt{n} \|\psi_n\|^2)^{- (d_\psi-a_2)_+}j^{(d_\psi-a_2)_+}.$$
 Finally let $\| (1-\alpha)\psi- \psi_n\|  \geq \delta (1-\alpha) \|\psi-\psi_n\|^2 \geq \delta^2\| (1-\alpha)\psi- \psi_n\|  $. Set $\eta_1 = (\theta_1 - \theta_{1,n})/\|\theta_1 - \theta_{1,n}\|$ , $\eta_2 = ((1-\alpha)\psi- \psi_n)/\|(1-\alpha)\psi- \psi_n\|$, $\eta_3 =  (\psi-\psi_n)/\|\psi-\psi_n\|$ and  $u_n = \|\theta_1 - \theta_{1,n}\| + \|(1-\alpha)\psi- \psi_n \| + (1-\alpha) \|\psi-\psi_n\|^2$ and 
 $$w_1 =  \frac{\|\theta_1 - \theta_{1,n}\|}{u_n} , \quad w_2 =  \frac{\|(1-\alpha)\psi- \psi_n\|}{u_n} , \quad  w_3 =  \frac{(1-\alpha) \|\psi-\psi_n\|^2}{u_n},$$ 
 Then $(w_1, w_2, w_3)$ belongs to the sphere in $\mathbf R^3$ with radius 1 and for each $j=1,2,3$, $\eta_j$ belongs to the sphere  (with radius 1 ) in $\mathbf R^d$ with $d= d_\psi$ or $d_1$ so that  \eqref{L1ineq} becomes
\begin{equation*}\label{L2ineq}
 \begin{split}
 u_n  \|   w_1\eta_1^T \nabla_\theta f_{1,\theta_{1,n}}  &  + w_2\eta_2^T \nabla_\psi f_{2, \theta_{1,n}, 0} + w_3\eta_3^T D_\psi^2 f_{2,\theta_{1,n}, 0}\eta_3 \|_1 .
    \lesssim  j/\sqrt{n},
   \end{split}
\end{equation*}
Assumption B4 implies that $u_n \lesssim  j/\sqrt{n}$. This leads to the same constraints as in the case of $\pi_{n,3}$ so that finally 
\begin{equation*}
\pi (S_n(j))  \lesssim (j/\sqrt{n})^{d_2}  + (j/\sqrt{n})^{d_2}  (\sqrt{n} \|\psi_n\|^2)^{- (d_\psi-a_2)_+}j^{(d_\psi-a_2)_+} \quad \mbox{and} \quad \frac{\pi (S_n(j)) }{ \pi(\tilde S_n) }\lesssim j^{d_2+2(d_\psi-a_2)_+}.
 \end{equation*}

 We now control the entropy of $S_n(j)$ for $j \leq M_0 \sqrt{\log n} $, i.e. the logarithm of the covering number of $S_n(j)$ by $L_1$ balls with radius $\zeta j /\sqrt{n}$, $\zeta >0$ arbitrarily small. 
Recall that from the above conditions $S_n(j)$ is included in 
 $$\{\| \theta - \theta_{1,n}\| \leq C_1 j/\sqrt{n} \}\cap\{\| (1- \alpha)\psi -\psi_n\|\leq C_1j/\sqrt{n} \}\cap \{ (1-\alpha) \| \psi - \psi_n\|^2 \leq C_1 j/\sqrt{n}\}$$
 for some $C_1>0$ large enough, so that if $\|\psi_n\| \lesssim j/\sqrt{n}$, we are back to the proof of Theorem \ref{th:cons:embed}, and the local entropy is bounded by a constant. If $j/\sqrt{n} \leq \delta \|\psi_n\|$ with $\delta $ small, recall from the above computations that $1-\alpha \geq \tau \sqrt{n}\|\psi_n\|^2/j$ for some $\tau >0$ and $\|\psi - \psi_n\| \leq C_1 j (1-\alpha)^{-1}/\sqrt{n}$. A Taylor expansion in $\psi, \psi', \theta_1, \theta_1'$ leads to 
 \begin{equation*}
 \begin{split}
&  \| \alpha f_{1,\theta} +(1-\alpha)f_{2, \theta, \psi}- ( \alpha' f_{1,\theta'} +(1-\alpha')f_{2, \theta', \psi'}) \|_1\\
%& = \|f_{1,\theta} - f_{1,\theta'} + (1-\alpha)\psi \nabla_\psi f_{2, \theta, 0}-(1-\alpha')(\psi')^T \nabla_\psi f_{2, \theta, 0} + (1-\alpha)\psi^T D^2_\psi f_{2, \theta, \bar \psi} \psi - (1-\alpha')(\psi')^T D^2_\psi f_{2, \theta, \bar \psi'} \psi' \| + O( \|\theta - \theta'\|)\\
&=\|( (1-\alpha)\psi -(1-\alpha')\psi')^T \nabla_\psi f_{2, \theta_1, 0} + \frac{(1-\alpha)}{2}\psi^T D^2_\psi f_{2, \theta_1, \bar \psi} \psi - \frac{(1-\alpha')}{2}(\psi')^T D^2_\psi f_{2, \theta_1, \bar \psi'} \psi' \|+O( \|\theta_1 - \theta_1'\| ) + o(j/\sqrt{n})\\
&=\|( (1-\alpha)\psi -(1-\alpha')\psi')^T \nabla_\psi f_{2, \theta_1, 0} + \frac{1}{2}[(1-\alpha)\psi - (1-\alpha')\psi']^T D^2_\psi f_{2, \theta_1, \bar \psi} \psi + \frac{(1-\alpha')}{2}(\psi')^T D^2_\psi f_{2, \theta_1, \bar \psi'} (\psi - \psi') \|_1 \\
& \quad + O( \|\theta_1 - \theta_1'\|) +  o(j/\sqrt{n}) \leq \frac{\zeta j }{\sqrt{n}}
%&\leq C\left( \|(1-\alpha)\psi -(1-\alpha')\psi'\| +\|\theta - \theta'\|+(1-\alpha)\|\psi\|(\|\psi - \psi'\| + \|\psi\|^2) \right)% \leq \zeta j/ \sqrt{n}
 \end{split}
 \end{equation*}
as soon as 
$\|\theta-\theta'\|\leq \delta  \zeta j /(4\sqrt{n})$, $\|(1-\alpha)\psi -(1-\alpha')\psi'\|\leq \delta \zeta j /(4\sqrt{n})$ and $\|\psi'-\psi\| \leq \delta \zeta j /(4\sqrt{n} \|\psi_n\|)$. The number of sets to cover 
$$\{ \|(1-\alpha)\psi - \psi_n\| \leq C_1j/\sqrt{n}\} \cap \{(1-\alpha) \|\psi - \psi_n\|^2 \leq C_1j/\sqrt{n}\}\cap \{\|\theta_1 - \theta_{1,n} \| \leq C_1j/\sqrt{n}\}$$
is bounded by a constant independent of $j$ and $n$. 
Finally combining  the lower bound on $D_n$, the upper bound on $\pi(S_n(j))$, the entropy bounds above and Theorem 2.4 of \citep{ghosal:ghosh:vdv:00}, we obtain that for all increasing sequence to infinity $z_n$,  
uniformly over  $\|\psi_n\| \geq M_n/\sqrt{n}$, $\theta_{1,n}  \in \Theta_1$
$$E_{f_n^*}\left[ \pi\left( \| f_{\alpha, \theta, \psi} - f_n^*\|_1\geq z_n /\sqrt{n} |\mathbf x^n\right)\right] = o(1), \quad f_n^* = f_{2, \theta_{1,n}, \psi_n}. $$

From the  above computations with $j = z_n $ going to infinity with $z_n  = o(\sqrt{n}\|\psi_n\|) = o(n^{1/4}) $ we have that $1-\alpha \geq C_2 \sqrt{n} \| \psi_n\|^2$, so that there exists   $M_0>0$ for which 
$$\sup_{\|\psi\| \geq M_n/\sqrt{n}} \sup_{\theta_1 \in \Theta_1} E_{\theta_{1}, \psi} \left[ \pi(1-\alpha < M_0M_n^2/\sqrt{n} | \mathbf x^n)\right] = o(1)$$
and Theorem \ref{th:separationrate} is proved. 

Note that when $a_2 >d_\psi$, for any $e_n>0$ going to 0,  the posterior probability of the set $A_n = \{1-\alpha \leq e_n\}$ has posterior probability going to 0 under $f_n^*$ since the prior mass of the event 
$$ \{\|\theta_1 - \theta_{1,n}\|+\|(1-\alpha)\psi - \psi_n\|+ (1-\alpha)\|\psi - \psi_n\|^2 \leq z_n/\sqrt{n}\} \cap \{ (1-\alpha)\leq e_n\}$$
is of order $O(e_n^{a_2-d_\psi}z_n^{d_2}n^{-d_2/2}) = o(n^{-d_2/2})$ as soon as $e_n^{a_2-d_\psi}z_n^{d_2}=o(1)$. Since we can choose $z_n$ going to infinity arbitrarily slowly, the result holds. 
% \end{proof}
\end{document}